\documentclass[11pt]{article}
\usepackage[utf8]{inputenc}
\usepackage[T1]{fontenc}
\usepackage[a4paper, margin=25mm]{geometry}
\usepackage{lmodern,microtype,algorithm,algpseudocode,amsthm,amsfonts,mathtools,thmtools,thm-restate,tikz,graphicx,hyperref}
\usepackage{enumitem}\setlist[enumerate,1]{label={(\alph*)}}

\usetikzlibrary{arrows.meta}

\theoremstyle{plain}
\declaretheorem[name=Theorem]{theorem}

\declaretheorem[sibling=theorem,name=Corollary]{corollary}

\declaretheorem[sibling=theorem,name=Conjecture]{conjecture}
\declaretheorem[name=Claim]{claim}

\title{Flood-It with Jewelry -- Characterizing the Game Complexity for Cograph Generalizations}

\author{Martin Darm\"{u}ntzel\\[0.5em]University of Rostock\\Germany\\[0.2cm]\texttt{\href{mailto:martin.darmuentzel@uni-rostock.de}{martin.darmuentzel@uni-rostock.de}}
	\and Christian Rosenke\\[0.5em]University of Rostock\\Germany\\[0.2cm]\texttt{\href{mailto:christian.rosenke@uni-rostock.de}{christian.rosenke@uni-rostock.de}}
	\and Mark Scheibner\\[0.5em]Independent Researcher\\Germany\\[0.2cm]\texttt{\href{mailto:mail@mark-scheibner.de}{mail@mark-scheibner.de}}}

\date{}

\usepackage{algorithm,algpseudocode,amsmath,amssymb,amsfonts,tikz}

\newcommand{\FloodIt}{Flood-It}
\newcommand{\FreeFloodIt}{Free Flood-It}
\newcommand{\MinimumFloodIt}{\textsc{Flood-It}}
\newcommand{\MinimumFreeFloodIt}{\textsc{Free Flood-It}}
\newcommand{\FIt}{FIt}
\newcommand{\FFIt}{FFIt}

\newcommand{\cG}{\mathcal{G}}

\newcommand{\floodarea}{flooded area}
\newcommand{\floodareas}{flooded areas}

\newcommand{\flooding}{flooding}
\newcommand{\floodings}{floodings}
\newcommand{\playboard}{playfield}

\newcommand{\Playboards}{Playfields}

\newcommand{\gemlets}{jewels}

\newcommand{\Gemlets}{Jewels}

\newcommand{\NP}{\mathsf{NP}}

\newcommand{\pfour}{\ensuremath{P_4}}
\newcommand{\pfive}{\ensuremath{P_5}}
\newcommand{\house}{\text{house}} 
\newcommand{\chair}{\text{chair}}
\newcommand{\kite}{\text{kite}} 
\newcommand{\gem}{\text{gem}}
\newcommand{\cogem}{\text{co-gem}} 
\newcommand{\banner}{\text{banner}}
\newcommand{\cobanner}{\text{co-banner}}
\newcommand{\bull}{\text{bull}} 
\newcommand{\cfive}{\ensuremath{C_5}} 

\newcommand{\claw}{\text{claw}}
\newcommand{\dimond}{\text{diamond}}
\newcommand{\twoKtwo}{\ensuremath{2K_2}}
\newcommand{\cfour}{\ensuremath{C_4}}

\newcommand{\cO}{{\cal O}}
\newcommand{\N}{\mathbb{N}}

\newcommand{\tip}{tip}
\newcommand{\dist}{\ensuremath{\operatorname{dist}}}

\begin{document}

\maketitle

\begin{abstract}
	\emph{\FloodIt} is a single-player game played on a precolored graph $G$, where the objective is to make $G$ monochromatic using as few \emph{flooding} moves as possible.
	In each move, a color $c$ is selected and all vertices reachable from a fixed pivot vertex via a monochromatic path are recolored with $c$.
	In the \emph{free} variant, the pivot may be chosen anew in every move.

	Deciding whether a graph can be made monochromatic in at most $k$ moves is $\NP$-complete for both variants, fixed and free.
	This hardness persists even under strong structural restrictions such as split graphs and trees.
	The \emph{\FreeFloodIt} variant is generally considered more difficult than its fixed-pivot counterpart, as it remains hard on several graph classes where the latter becomes tractable, including co-comparability and AT-free graphs.

	Cographs, that is, {\pfour}-free graphs, are among the few classes on which even {\FreeFloodIt} is solvable in polynomial time and therefore serve as our starting point.
	We consider the ten natural one-vertex extensions of {\pfour} -- referred to as \emph{\gemlets} -- and study the complexity of both flooding games on the $1024$ graph classes obtained by forbidding subsets of these graphs as induced subgraphs.

	Our main contribution is a polynomial-time algorithm for {\FreeFloodIt} on graphs that are free of the three {\gemlets} {\bull}, {\gem}, and \pfive, covering $128$ of the $1024$ classes.
	In addition, we prove that both variants remain $\NP$-complete on thin-spider graphs, which exclude the eight {\gemlets} {\banner}, {\cobanner}, {\chair}, {\gem}, {\house}, {\kite}, {\pfive}, and {\cfive}, thereby establishing hardness for $256$ additional classes.

	Combined with known algorithms and hardness results, our work determines the complexity of both {\FloodIt} variants for $896$ of the $1024$ considered graph classes.
\end{abstract}
\newpage


\section{Introduction}

\emph{Flood-It} and several of its variants have been studied extensively between 2010 and 2020.
This substantial body of work has shown that computing optimal strategies remains computationally hard, even under severe restrictions on the input.
For comprehensive overviews of known hardness results and the comparatively few polynomial-time solvable cases, we refer the interested reader to the excellent surveys by Fellows et al.~\cite{Fellows2018} and Fleischer and Woeginger~\cite{Fleischer2010}.

While the original flood-filling game is played on precolored grids, the notion of a {\playboard} has been generalized in the literature to arbitrary graphs to better capture and analyze the underlying algorithmic challenges.
In this paper, we focus on single-player variants, where each game is described by a \emph{move order}, that is, a sequence of moves, each of which floods a chosen color $c$ onto all vertices $w$ that are reachable from a pivot vertex $v$ via a monochromatic path, that is, a path from $v$ to $w$ whose vertices all have the same color at the time of the move.
The goal of the game is to make the {\playboard} graph $G$ monochromatic; accordingly, any move order that achieves this is called a \emph{strategy} for $G$.
In the variant called \emph{\FreeFloodIt} the player may \emph{freely} choose the pivot vertex in every move.
Naturally, if the pivot is a \emph{fixed} vertex specified as part of the input, we obtain the \emph{(fixed) \FloodIt} game.
In both variants, the objective is to find a strategy of minimum length.
There are corresponding decision variants of these problems that additionally take an integer $k$ as input and ask whether there exists a strategy using at most $k$ moves.
Since the relevant variant, optimization or decision, will always be clear from context, this paper does not introduce separate notations.

It is known that the computational problems underlying these flood-filling games, {\MinimumFloodIt} and {\MinimumFreeFloodIt}, remain intractable even under severe input restrictions such as \emph{split graphs}~\cite{Fleischer2010, Fukui2013}.
In particular, both problems are $\NP$-hard when the input {\playboard} is a \emph{tree}~\cite{FELLOWS2015} or a \emph{grid}~\cite{Clifford2012}, even if it is initially colored with only three colors.

Only very few non-trivial graph classes are known on which {\MinimumFloodIt} ({\FIt}, for short) is tractable, including \emph{co-comparability}~\cite{Fleischer2010} and \emph{AT-free} graphs~\cite{Fukui2013}.
In contrast, {\MinimumFreeFloodIt} ({\FFIt}, for short) remains hard even on these classes.
For many years, polynomial-time solvability of the free variant has been known only for very restricted grids~\cite{MEEKS2012}, as well as for simple paths and cycles~\cite{MeeksScott2014}.
However, for the second powers of paths and cycles~\cite{souza2014}, {\FFIt} becomes $\NP$-hard again.

As part of the algorithmic techniques developed in this paper, we show that deciding whether a graph $G$ can be made monochromatic using exactly $k-1$ free moves -- one for each of the $k$ colors used in $G$ except the last -- can be done in polynomial time, independent of the presence of certain structures in the graph and the initial coloring.
Indeed, if the {\playboard} initially uses $k$ distinct colors, any strategy requires at least $k-1$ moves, since each move can eliminate at most one color.
Given the apparent resistance of {\FFIt} to efficient algorithms, it is somewhat surprising that deciding the existence of a $k-1$-move strategy is always in polynomial time.

The main contribution of this paper, however, is motivated by the recent result~\cite{rosenke2026} that {\FFIt} is polynomial-time solvable on \emph{cographs} -- the graphs without induced {\pfour}.
This naturally raises the question of how the restriction of forbidding induced {\pfour} can be relaxed without losing tractability.
In~\cite{rosenke2026}, this question is addressed by considering the extension of {\pfour} by an independent vertex and deriving a significantly more involved polynomial-time algorithm for graphs that are free of induced {\cogem}s; see Figure~\ref{fig:example}~(f).

It is well known that the {\cogem} is not the only graph generalizing {\pfour}, and that there are exactly ten non-isomorphic one-vertex extensions of {\pfour}; see, for example,~\cite{brandstaedt2005}.
As illustrated in Figure~\ref{fig:example}, these graphs can be viewed as interpolating between the {\cogem} and its complement, the {\gem}, by selecting a subset of the four edges that connect the additional vertex to the underlying {\pfour}.
Motivated by this interpretation, we refer to these graphs collectively as \emph{\gemlets}.

\begin{figure}[h!]
  \def\myscale{0.95}
  \tikzstyle{vertex}=[draw,circle,fill=black,inner sep=1pt]
  \tikzstyle{ivertex}=[draw,circle,inner sep=.8pt,fill=white]
  \tikzstyle{edges}=[draw=black!66, line join=round]
  \newcommand{\drawGemlet}[4]{
    \draw[edges] (-0.99,0) node[vertex] (v1) {} \foreach \i in {2,3,4}{ -- ++(0.66,0) node[vertex] (v\i) {}};
	\node[#4] (x) at (0,-1) {};
	\foreach \i in {#1}{\draw[edges] (x) -- (v\i);}
  	\node[below=0.5cm] at (x) {(#3)};
	\path (v1) -- (v4) node[pos=0.5, above=0.33cm] {#2};
  }
  \newcommand{\drawSmallGraph}[2]{
	\path (-0.33,0) node[vertex] (v1) {} \foreach \i/\dx/\dy in {2/0.66/0,3/0/-0.66,4/-0.66/0}{ -- ++(\dx,\dy) node[vertex] (v\i) {}};
	\foreach \v/\w in {#1} {\draw[edges] (v\v) -- (v\w);}
	\coordinate (x) at (0,-1);
	\path (v1) -- (v2) node[pos=0.5, above=0.33cm] {#2};
  }

  \begin{tikzpicture}[scale = \myscale, every node/.style={scale=\myscale}]
	\begin{scope}[xshift=0cm]
    	\drawGemlet{1,2,3,4}{\gem}{a}{vertex}
	\end{scope}
	\begin{scope}[xshift=3cm]
    	\drawGemlet{1}{\pfive}{b}{vertex}
	\end{scope}
	\begin{scope}[xshift=6cm]
    	\drawGemlet{2}{\chair}{c}{vertex}
	\end{scope}
	\begin{scope}[xshift=9cm]
    	\drawGemlet{1,3}{\banner}{d}{vertex}
	\end{scope}
	\begin{scope}[xshift=12cm]
    	\drawGemlet{1,4}{{\cfive} (also co-\cfive)}{e}{vertex}
	\end{scope}
  \end{tikzpicture}

  \vspace*{0.5cm}
  \begin{tikzpicture}[scale = \myscale, every node/.style={scale=\myscale}]
	\begin{scope}[xshift=0cm]
    	\drawGemlet{}{\cogem}{f}{vertex}
	\end{scope}
	\begin{scope}[xshift=3cm]
    	\drawGemlet{1,2,4}{{\house} (co-\pfive)}{g}{vertex}
	\end{scope}
	\begin{scope}[xshift=6cm]
    	\drawGemlet{1,2,3}{{\kite} (co-\chair)}{h}{vertex}
	\end{scope}
	\begin{scope}[xshift=9cm]
    	\drawGemlet{1,2}{\cobanner}{i}{vertex}
	\end{scope}
	\begin{scope}[xshift=12cm]
    	\drawGemlet{2,3}{{\bull} (also co-\bull)}{j}{vertex}
	\end{scope}
  \end{tikzpicture}

  \caption{The {\gemlets}, one-vertex extensions of the {\pfour}, given with common name and complement graph relation}
  \label{fig:example}
\end{figure}

Now, if we consider arbitrary subsets of these small graphs as forbidden induced subgraphs, we obtain a large collection of natural generalizations of cographs.
Based on this, we continue the line of inquiry initiated in~\cite{rosenke2026} in a natural way and ask, for every possible subset $\cG$ of the {\gemlets}, what the complexity of (Free) {\FloodIt} is when the input is restricted to the graph class defined by forbidding all graphs in $\cG$.
Since there are ten {\gemlets}, this framework gives rise to a family of $2^{10}=1024$ graph classes.
The empty set of forbidden graphs yields the class of all graphs, whereas forbidding all ten {\gemlets} results in the class of cographs (together with the single graph~$\pfour$).

\medskip
Given the size of this family, it is natural to revisit existing results in order to preemptively resolve as many cases as possible.
Starting with the tractable cases, and noting that adding further forbidden induced subgraphs cannot destroy tractability, the recent result~\cite{rosenke2026} already implies that {\FreeFloodIt} is solvable in polynomial time on the $512$ classes that include {\cogem} in the set of forbidden graphs.
\begin{corollary}\label{crl:P-Time:from_cogemfree}
	{\MinimumFreeFloodIt} can be solved in polynomial time on every graph class whose set of forbidden induced {\gemlets} contains {\cogem}.
\end{corollary}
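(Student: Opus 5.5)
The plan is to derive the statement from the cited result of~\cite{rosenke2026}, which solves {\MinimumFreeFloodIt} in polynomial time on {\cogem}-free graphs, together with the monotonicity of graph classes defined by forbidden induced subgraphs.

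Fix a subset $\cG$ of the ten {\gemlets} with ${\cogem} \in \cG$, and let $\mathcal{C}_\cG$ be the class of graphs containing no member of $\cG$ as an induced subgraph. First I will show that $\mathcal{C}_\cG$ is contained in the class of {\cogem}-free graphs. This is immediate: if $G \in \mathcal{C}_\cG$, then $G$ has no induced subgraph isomorphic to any graph in $\cG$. Since ${\cogem} \in \cG$, in particular $G$ has no induced {\cogem}. Forbidding a larger set of graphs can only shrink the class.

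Second, I will restrict the algorithm from~\cite{rosenke2026} to $\mathcal{C}_\cG$. On input $G \in \mathcal{C}_\cG$ (with an integer $k$ in the decision version), run that algorithm unchanged. It is correct on every {\cogem}-free graph, so it is correct on $G$, and its running time is polynomial in the size of $G$. No membership test for $\mathcal{C}_\cG$ is needed, since the input is promised to lie in the class. If a test were wanted, it could be done by brute force over five-vertex subsets in $\cO(n^5)$ time.

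There is no real obstacle here. The only points to check are that the cited result is stated for the free variant and for arbitrary initial colorings, and that it covers both the optimization and the decision version, since the paper does not distinguish between them. Finally, the sets $\cG$ containing {\cogem} are in bijection with the subsets of the other nine {\gemlets}, which accounts for the $2^9 = 512$ classes mentioned before the statement.
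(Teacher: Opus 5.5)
Your proposal is correct and matches the paper's argument. The paper likewise derives the corollary directly from the polynomial-time result of \cite{rosenke2026} for co-gem-free graphs, where it is obtained as a consequence of the Miniature Painting algorithm; the only additional observation is that forbidding further induced subgraphs yields a subclass and so preserves tractability, which also gives the $2^9=512$ classes you count.
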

A closer look at~\cite{rosenke2026} reveals that the paper studies the related \textsc{Miniature Painting} problem and polynomial-time solvability of {\FreeFloodIt} follows only as a corollary.
This implication does not readily extend to the fixed-pivot variant.
Nevertheless, no graph class is currently known on which the free variant is tractable while the fixed variant remains hard.
This motivates the conjecture that {\FloodIt} is also polynomial-time solvable on \cogem-free graphs.

The main result of the present paper follows the general direction of the previous work in~\cite{rosenke2026} and develops new algorithmic techniques for graph classes defined by forbidding {\pfive}, {\bull}, and {\gem}.
Specifically, Section~\ref{sec:graph_structure} analyzes the structure of $(\pfive,\bull,\gem)$-free graphs and derives properties that enable efficient algorithms.
These insights are then used in Sections~\ref{sec:Polytime_Edge_Case} and~\ref{sec:Polytime_Bead_Bracelets} to obtain a polynomial-time algorithm for {\FreeFloodIt} on this class.
This leads to the following theorem.
\begin{theorem}\label{thm:polynomial_time+cases}
	{\MinimumFreeFloodIt} can be solved in polynomial time on every graph class whose forbidden induced {\gemlets} include {\pfive}, {\bull}, and {\gem}.
\end{theorem}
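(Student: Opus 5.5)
The plan is to reduce the theorem to the single class of $(\pfive,\bull,\gem)$-free graphs. Adding further forbidden {\gemlets} only shrinks the class, so any algorithm for this class works for every class whose forbidden set contains {\pfive}, {\bull} and {\gem}. The remaining work is a structural analysis of $(\pfive,\bull,\gem)$-free graphs followed by an algorithm that exploits it, in the spirit of the cograph and {\cogem}-free approach of~\cite{rosenke2026}.

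\textbf{Structural step.} I would first handle the disconnected case and the case where the complement is disconnected. There the graph is a disjoint union or a join, and, as for cographs, optimal free strategies can be combined from the parts. For the join case, once one side is monochromatic and meets the other, everything merges quickly, so the optimum should be essentially the minimum over a few ways of spending moves on each side.

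The core case is a graph that is connected and co-connected. Here I would use modular decomposition and argue that the quotient graph is prime and $(\pfive,\bull,\gem)$-free. I would then aim to show that such a prime graph has a very restricted, cycle-like shape. Prime bull-free and gem-free graphs without long induced paths should be forced into a $C_5$-like or ``bracelet'' structure, in which the modules play the role of beads arranged in a cycle, with consecutive beads complete to each other and nonconsecutive beads anticomplete. Forbidding the {\gem} should further make the beads themselves highly restricted, for instance $P_4$-free or even stable.

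\textbf{Algorithmic step.} With the bracelet structure in hand, I would first use the fact that each move eliminates at most one color, giving the lower bound of $k-1$ moves on a graph with $k$ colors. Next I would give a polynomial test for whether exactly $k-1$ moves suffice; this is the ``edge case'' highlighted in the introduction. The test would proceed by showing that a $(k-1)$-strategy exists iff some elimination order of the colors is consistent with the connectivity of the color classes. That condition can be checked greedily: repeatedly pick a color class whose components can each be absorbed by a single move. Beyond that, I would argue that on bead bracelets an optimal strategy uses at most $k-1+c$ moves for a small constant $c$, and that optimal strategies can be normalized. The normalized strategy first floods inside a bounded number of beads, via the cograph algorithm applied to the modules, and then sweeps around the cycle. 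This would allow a dynamic program over beads, or an enumeration of polynomially many strategy ``shapes''.

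The main obstacle I expect is the normalization on bead bracelets. Free moves can merge beads in complicated interleavings, so an exchange argument is needed to show that some optimal strategy has a canonical form. I would try first to prove that once two adjacent beads become one monochromatic connected region, the rest can be treated as a path or cycle of beads, where Meeks--Scott style interval dynamic programming applies.
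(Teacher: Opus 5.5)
\textbf{There is a genuine gap, in both the structural step and the algorithmic step.}

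\textbf{Structural step.} It breaks down in the core case, because prime graphs in this class need not be cycle-like. Both {\pfour} itself and the {\house} (the complement of {\pfive}, hence prime) are connected, co-connected, $(\pfive,\bull,\gem)$-free, and contain no induced {\cfive}. No bracelet decomposition exists for them, and your plan has nothing to say about such graphs. The dichotomy you are missing is this:
\begin{itemize}
\item If a connected $(\pfive,\bull,\gem)$-free $G$ contains an induced {\cfive}, then $G$ is a $5$-bead bracelet. The beads are cographs, since a {\pfour} in $H_i$ together with $v_{i+1}$ would be a {\gem}, but they need not be stable.
\item If $G$ is {\cfive}-free, then $G$ has a \emph{dominating edge}. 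The paper first locates a dominating induced {\pfour} and then shows, via {\pfive}- and {\cfive}-freeness, that its middle edge dominates.
\end{itemize}
This dominating-edge branch, not the $(|C|-1)$-test itself, is the ``edge case''. Your join case is just a special instance of it, so neither modular decomposition nor a separate join analysis is needed.

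\textbf{Algorithmic step.} It lacks the idea that makes both branches easy: a small connected dominating hub caps the optimum at $|C|$. The lower bound $|C|-1$ and a polynomial $(|C|-1)$-test then settle the problem; your greedy test is in line with Algorithm~\ref{alg:minimal_strategy}.

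With a dominating edge the cap is immediate. One move makes the edge monochromatic, and $|C|-1$ further moves from it finish.

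For bracelets, three vertices from consecutive beads dominate, but this a priori only gives $|C|+1$. What you need is the argument that $|C|$ always suffices:
\begin{itemize}
\item If two non-neighboring beads share a color, one move from a pivot in the bead between them creates a monochromatic dominating path.
\item Otherwise non-neighboring beads are color-disjoint. Then a color $c$ of $H_0$ occurs only in $H_0$ and in at most one of $H_1,H_4$. The moves $(v,c),(v,f_0(w))$ with $v\in V(H_1)$ and $w\in V(H_2)$ build the dominating path while eliminating $c$.
\end{itemize}

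Your ``$k-1+c$'' bound, the normalization, and the Meeks--Scott interval DP are never carried out; you flag them yourself as the main obstacle. Without them you cannot even separate $|C|$ from $|C|+1$, so no algorithm results. None of that machinery is needed, nor is any cograph subroutine for the beads: the bracelet argument works for arbitrary beads. As a fallback, $5$-bead bracelets are {\cogem}-free, so Corollary~\ref{crl:P-Time:from_cogemfree} would also cover that branch, but your proposal does not invoke it.
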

Our result is stated only for {\FFIt}, as the polynomial-time algorithm for AT-free graphs from~\cite{atfree} already implies tractability of the fixed-pivot variant on $(\pfive,\bull,\gem)$-free graphs, which are AT-free.
{\FFIt}, however, remains hard on AT-free graphs~\cite{Fukui2013}; thus, this paper establishes a boundary between tractability and hardness within this graph class.
The main result covers $128$ of the $1024$ classes of cograph generalizations.
However, beyond the $512$ classes already addressed, only $64$ are new.

\medskip
To identify classes with intractable flooding strategies, we take a closer look at the $\NP$-completeness proofs by Fleischer and Woeginger~\cite{Fleischer2010} and by Fukui et~al.~\cite{Fukui2013}.
These reveal that their reductions for split graphs satisfy strong structural constraints.
In general, split graphs are free of induced {\pfive}; see Figure~\ref{fig:example}~(b).
However, the constructed gadget graphs are, in fact, also free of {\house}, {\banner}, {\cobanner}, {\cfive}, {\kite}, and {\gem}.
Consequently, both flooding games remain $\NP$-complete on every graph class defined by forbidding an arbitrary subset of these {\gemlets}.

\begin{figure}[b!]
  \def\myscale{0.95}
  \tikzstyle{vertex}=[draw,circle,fill=black,inner sep=1pt]
  \tikzstyle{ivertex}=[draw,circle,inner sep=.8pt,fill=white]
  \tikzstyle{edges}=[draw=black!66, line join=round]
  \newcommand{\drawSmallGraph}[2]{
	\path (-0.33,0) node[vertex] (v1) {} \foreach \i/\dx/\dy in {2/0.66/0,3/0/-0.66,4/-0.66/0}{ -- ++(\dx,\dy) node[vertex] (v\i) {}};
	\foreach \v/\w in {#1} {\draw[edges] (v\v) -- (v\w);}
	\coordinate (x) at (0,-1);
	\path (v1) -- (v2) node[pos=0.5, above=0.33cm] {#2};
  }
  \vspace*{0.5cm}
  \centering
  \begin{tikzpicture}[scale = \myscale, every node/.style={scale=\myscale}]
	\begin{scope}[xshift=3cm]
		\drawSmallGraph{1/2,2/3,3/4,4/1}{\cfour}
	\end{scope}
	\begin{scope}[xshift=6cm]
		\drawSmallGraph{1/2,3/4}{{\twoKtwo} (co-\cfour)}
	\end{scope}
	\begin{scope}[xshift=9cm]
		\drawSmallGraph{1/2,2/3,3/4,4/1,1/3}{\dimond}
	\end{scope}
	\begin{scope}[xshift=12cm]
		\drawSmallGraph{1/2,1/3,1/4}{\claw}
	\end{scope}
  \end{tikzpicture}

  \caption{Additional graphs appearing in this paper}
  \label{fig:smallgraphs}
\end{figure}

As a second contribution of this paper, Section~\ref{sec:np_completeness} refines these proof techniques by employing \emph{thin spiders} as reduction gadgets.
These split graphs are, in addition to the seven {\gemlets} listed above, also free of induced {\dimond} and {\claw}; see Figure~\ref{fig:smallgraphs}.
Since {\chair} is a natural generalization of {\claw}, it can be added to the list of forbidden graphs.
As a result, Section~\ref{sec:np_completeness} establishes $\NP$-completeness for both variants of {\FloodIt} on a broad range of classes, summarized in the following theorem.
\begin{theorem}\label{thm:NP-completeness:from_split_graphs}
	{\MinimumFloodIt} and {\MinimumFreeFloodIt} are $\NP$-complete on the class of $X$-free graphs for every $X \subseteq \{\pfive, \house, \chair, \kite, \banner, \cobanner, \cfive, \gem\}$.
\end{theorem}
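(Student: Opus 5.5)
Since forbidding fewer graphs only enlarges the class, and membership in $\NP$ is immediate for both variants (a strategy of at most $k$ moves is a polynomial certificate, checked by simulating the floods), it suffices to prove $\NP$-hardness on the smallest class, the $(\pfive,\house,\chair,\kite,\banner,\cobanner,\cfive,\gem)$-free graphs. I plan to reduce into \emph{thin spiders}: a clique $K=\{k_1,\dots,k_n\}$, an independent set $S=\{s_1,\dots,s_n\}$, and possibly a set $R$ fully adjacent to $K$ and nonadjacent to $S$, where $s_i$ is adjacent exactly to $k_i$.

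The first step is structural. Every thin spider avoids all eight graphs. Being split, it contains no $\pfive$, $\house$ or $\cfive$ (each contains $C_4$, $2K_2$ or $C_5$). For the remaining five I use a degree argument. In a thin spider every vertex of $S$ has degree $1$ and its neighbors have distinct, private leaves. $\chair$, $\banner$, $\cobanner$ and $\kite$ each contain an induced $P_4$ $a\!-\!b\!-\!c\!-\!d$. In a thin spider the ends $a,d$ of such a path must be legs $s_i,s_j$ and the middle must be $k_i,k_j$, since $K\cup R$ is a clique up to $R$-internal non-edges and $R$-vertices have the same neighborhood as one another on $K$. The fifth vertex must then be adjacent to some subset of $\{a,b,c,d\}$. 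A leg can only see one body vertex, so it cannot be adjacent to $a$ or $d$. A body or $R$ vertex sees both $b$ and $c$, forming $\gem$ only if it also sees both ends, which is impossible, or it sees $b,c$ and neither end, forming $\bull$; $\bull$ is allowed. A leg $s_\ell$ sees at most one of $b,c$, and only if $\ell\in\{i,j\}$, which contradicts the leaves being private. So only the $\bull$ (and the cogem-type configuration with an isolated leg) occurs, and all eight listed graphs are excluded. I will write this as a short claim, checking each gemlet against its edge pattern in Figure~\ref{fig:example}.

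The second step is the hardness reduction itself, adapting Fleischer--Woeginger and Fukui et al. Their split-graph reductions start from a covering problem (e.g.\ vertex cover, or shortest common supersequence/hitting-set style instances). Color classes encode sets, and the colors of a clique body must be flooded in a sequence that "hits" every pendant's color. Their gadgets have vertices on the independent side with several clique neighbors, which creates claws, diamonds and chairs. I would replace each such vertex by a thin-spider leg. To preserve the cost structure, I would give each body vertex $k_i$ a pendant $s_i$ whose color encodes the requirement, and use colored copies of the clique part (or several pendants realized via $R$ replicas) to force many moves unless the strategy floods colors in a prescribed order. For the fixed variant the pivot goes in $K$. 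For the free variant I would argue that any strategy can be normalized to play only on the clique component, since flooding a leg separately never helps more than flooding the body it attaches to.

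The main obstacle is this second step. I expect to need a careful counting lemma showing that an optimal (free) strategy on the constructed thin spider corresponds to a solution of the source instance, with cost exactly the source budget plus a fixed offset. As a first attempt I would reduce from \textsc{Vertex Cover}/\textsc{Hitting Set} with large color multiplicities, so that the normalization argument dominates.
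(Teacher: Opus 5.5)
You reduce correctly to the smallest class, and your structural step is sound. It takes a more hands-on route than the paper, which notes that thin spiders are exactly the $(\claw,\dimond)$-free split graphs, hence $(\claw,\dimond,\twoKtwo,\cfour,\cfive)$-free, and that each of the eight jewels contains one of these five graphs. You instead check directly that every induced $\pfour$ has the form $s_i k_i k_j s_j$ and that every tip yields a $\bull$ or a $\cogem$. For this you should take $R=\emptyset$, since a head $R$ containing a $\pfour$ together with any vertex of $K$ induces a $\gem$.

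The genuine gap is the second step, which carries all the content of the theorem: no reduction is actually given. Naming \textsc{Vertex Cover} or \textsc{Hitting Set} and saying that pendant colors should ``encode the requirement'' fixes no gadget, coloring, budget or correctness argument. Some devices you float would also leave the class. A clique vertex $k$ with two pendants $s,s'$, together with the leg $s''$ of another clique vertex $k'$, induces a $\chair$: the path $s,k,k',s''$ with tip $s'$ attached at $k$.

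The missing idea is the concrete encoding the paper uses.
For every $v \in V(H)$, add a clique vertex $\dot v$ of a neutral color $0$ with a leaf of color $v$. These clique vertices form the pivot's area and force every color $v$ to be played from it at least once.
For every edge $vw$, add two clique vertices colored $v$ and $w$ whose leaves carry the swapped colors $w$ and $v$. Conquering both leaves from the pivot's area requires playing $v$ before some $w$ and $w$ before some $v$, so one endpoint's color must be played twice.
Hence a fixed strategy of at most $n+\kappa$ moves exists iff $H$ has a vertex cover of size at most $\kappa$.

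For the free variant, ``play only on the clique'' is not enough, and two further arguments are needed. First, an exchange argument removes leaf pivots without increasing the move count. If the leaf is already merged with its parent, replace the move by one from the parent; otherwise delete it and append a final move $(\dot v, v)$, where $v$ is the leaf's initial color. Second, and more importantly, you need an accounting argument for moves whose pivot lies in $K$ but outside the pivot's area. An example is recoloring the $w$-colored clique vertex of edge $vw$ on its own so that it merges with its $v$-colored leaf. The paper handles this by classifying moves as hard or soft $v$-moves and showing that each move is a $v$-move for at most one $v$ and that each $v$ has a hard move. It then derives that an uncovered edge $vw$ forces a second $v$- or $w$-move.

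Without these pieces, your proposal only shows that thin spiders lie in the class, not that flooding is hard on them.
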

Since this covers all subsets of eight of the ten {\gemlets}, the theorem resolves the complexity question for another $256$ classes in our family.
Implicitly, $128$ of them have been known before from the results of~\cite{Fleischer2010,Fukui2013}.

Another source of hardness is provided by a result of Fellows et~al.~\cite{FELLOWS2015}, who show that both game variants are $\NP$-complete when restricted to trees.
Because none of the cyclic {\gemlets} can appear as induced subgraphs of trees, forbidding them does not restrict this class.
Hence, their result directly yields the following corollary.
\begin{corollary}\label{crl:NP-completeness:from_trees}
	{\MinimumFloodIt} and {\MinimumFreeFloodIt} are $\NP$-complete on the class of $X$-free graphs for every $X \subseteq \{\banner, \cobanner, \bull, \cfive, \kite, \house, \gem\}$.
\end{corollary}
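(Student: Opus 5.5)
The plan is to combine the result of Fellows et~al.~\cite{FELLOWS2015} with the observation that trees contain none of the listed {\gemlets}. Both variants are $\NP$-complete on trees. Here membership in $\NP$ is immediate for every class: a strategy of at most $k$ moves is a polynomial certificate that can be checked by simulating the moves. Hardness is inherited by superclasses. So it suffices to show that every tree is $X$-free for $X = \{\banner, \cobanner, \bull, \cfive, \kite, \house, \gem\}$. Then the class of trees is contained in the class of $X'$-free graphs for every $X' \subseteq X$, since forbidding fewer graphs only enlarges the class.

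The key step is to check that each of the seven graphs in $X$ contains a cycle, using the descriptions in Figure~\ref{fig:example}. In each case the extra vertex $x$ is attached to the path $v_1v_2v_3v_4$ by the edges listed there.
\begin{itemize}
\item {\banner}: $x$ is adjacent to $v_1, v_3$, giving the $4$-cycle $x v_1 v_2 v_3$.
\item {\cfive}: $x$ is adjacent to $v_1, v_4$, giving a $5$-cycle.
\item {\cobanner}: $x$ is adjacent to $v_1, v_2$, giving a triangle.
\item {\bull}: $x$ is adjacent to $v_2, v_3$, giving a triangle.
\item {\house} and {\kite}: $x$ is adjacent to $v_1, v_2$ (among others), giving a triangle.
\item {\gem}: $x$ is adjacent to $v_1, v_2$ (among others), giving a triangle.
\end{itemize}
An induced subgraph of a tree is a forest, hence acyclic, so none of these graphs occurs as an induced subgraph of a tree.

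Finally, for the fixed variant the hardness result on trees must apply with a specified pivot. For the free variant it is likewise the tree result of~\cite{FELLOWS2015}. The introduction states that both problems are $\NP$-hard on trees, so both hold. There is no real obstacle here. The only care needed is to confirm that the cited tree reductions cover both variants, and to correctly identify that exactly the acyclic {\gemlets} ({\pfive}, {\chair}, {\cogem}) are excluded from $X$. The remaining seven are precisely the cyclic ones.
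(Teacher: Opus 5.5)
Your proposal is correct and follows the same route as the paper: both variants are $\NP$-complete on trees by Fellows et~al., and each of the seven listed {\gemlets} contains a cycle, so every tree is $X$-free and the hardness transfers to every such class. Your cycle checks against Figure~\ref{fig:example} are accurate, and the excluded {\gemlets} ({\pfive}, {\chair}, {\cogem}) are indeed exactly the acyclic ones.
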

Among the $128$ classes covered by this corollary, those whose forbidden set contains {\bull} are not addressed by Theorem~\ref{thm:NP-completeness:from_split_graphs}.
This accounts for $64$ additional resolved cases.

\medskip
At this stage, $896$ of the $1024$ graph classes can be considered settled.
The results in this paper account for $192$ previously unresolved classes, representing substantial progress.
The remaining $128$ graph classes are yet to be classified.
We suspect that resolving them will require fundamentally different approaches, both in terms of reduction gadgets for NP-completeness proofs and structural insights for polynomial-time algorithms.

\medskip
After the next section introduces basic concepts and notation, the remainder of the paper is organized as follows.
Section~\ref{sec:graph_structure} develops structural properties of $(\pfive,\bull,\gem)$-free graphs, which are subsequently exploited in Sections~\ref{sec:Polytime_Edge_Case} and~\ref{sec:Polytime_Bead_Bracelets} to obtain a polynomial-time algorithm for {\FFIt} on this graph class.
In Section~\ref{sec:np_completeness}, we prove that both flood-filling games are $\NP$-complete on thin spiders and derive Theorem~\ref{thm:NP-completeness:from_split_graphs} as a direct consequence.
Finally, Section~\ref{sec:conclusion} concludes the paper.
To keep the presentation accesible, all proofs are deferred to Section~\ref{sec:proofs}.


\section{Preliminaries}
\label{sec:preliminaries}

Before presenting our results, we fix the notation used throughout the paper.
We mostly rely on standard mathematical and graph-theoretic notation.
For sets $A$ and $B$, we denote the \emph{union}, \emph{difference}, and \emph{intersection} by $A+B$, $A-B$, and $A \cap B$, respectively.
Singleton sets $\{x\}$ are identified with their only element $x$, whenever this is clear from the context.

\subsection*{Graph Notation}

All graphs considered in this paper are finite, undirected, and without multiple edges or loops.
Figure~\ref{fig:example} depict the \emph{\gemlets}, that is, all graphs that can be obtained by extending a {\pfour} with one additional vertex, called the \emph{\tip}.
As explained in the introduction, these graphs form the basis for a spectrum of graph classes, each representing a certain generalization of cographs.
Moreover, Figure~\ref{fig:smallgraphs} shows some small graphs that are also referenced in this paper.

For every vertex $x \in V(G)$, we define the \emph{open neighborhood} $N_G(x) = \{y \mid xy \in E(G)\}$ and the \emph{closed neighborhood} $N_G[x] = N_G(x)+x$.
Since, throughout the paper, the graph $G$ is always clear from the context, we omit it as a subscript and simply write $N(x)$ and $N[x]$.
We say that a vertex $x$ \emph{sees} a vertex $y$, if $y \in N(x)$.
If $|N(x)| = 1$ then $x$ is called a \emph{leaf} and the only vertex in $N(x)$ is the \emph{parent} of $x$.
For a subset $V \subseteq V(G)$, we define $N[V] = \bigcup_{v \in V} N[v]$.
A \emph{dominating set} $D$ of a graph $G$ is a subset of $V(G)$ such that $N[D] = V(G)$.

If $G$ is a graph, every subgraph $H=(V,E)$ of $G$ is specified by a vertex subset $V \subseteq V(G)$ and an edge subset $E \subseteq E(G)$ that contains only edges with both endpoints in $V$.
A subgraph $H$ is \emph{induced}, if $E$ contains exactly all edges $vw \in E(G)$ with $v,w \in V$.
For any vertex subset $V \subseteq V(G)$, we write $G[V]$ for the induced subgraph of $G$ on $V$.
We use $G-V$ as a shorthand for $G[V(G)-V]$.
Graphs $G$ and $H$ are \emph{isomorphic}, if there exists a bijection $\sigma \colon V(G) \rightarrow V(H)$ such that $xy \in E(G)$ if and only if $\sigma(x)\sigma(y) \in E(H)$.
If no induced subgraph of $G$ is isomorphic to a graph $H$, then $G$ is called $H$-\emph{free}.
A subgraph $H$ of $G$, induced or not, is said to \emph{dominate} $G$, if $V(H)$ is a dominating set of $G$.

For graphs $G$ and $H$ with disjoint vertex sets and a vertex $v \in V(G)$, the \emph{substitution} $G[v \leftarrow H]$ of $v$ with $H$ is the graph with vertex set $V(H) + V(G) - v$ and edge set $E(H) + E(G) - \{vw \mid w \in N_G(v)\}$,
together with all edges $uw$ for $u \in V(H)$ and $w \in N_G(v)$.
In this way, the graph $H$ replaces the single vertex $v$ in $G$ and becomes a \emph{module} of $G[v \leftarrow H]$, that is, a subgraph whose vertices are either all seen or all unseen by every vertex in $V(G)-v$.
Since the order of substituting graphs $H_1, \dots, H_k$ for pairwise distinct vertices $v_1, \dots, v_k$ of a graph $G$ is irrelevant, we write
$G[v_1, \dots, v_k \leftarrow H_1, \dots, H_k]$ to denote the result of these substitutions in arbitrary order.

\subsection*{Flood Filling for Graphs}

For clarity, we refer to any surjective function $f\colon V(G) \rightarrow C$ with color set $C$ as a \emph{\flooding} of a graph $G$, rather than using the more common term \emph{coloring}.
For every color $c \in C$, we write $f^{-1}(c) = \{v \in V(G) \mid f(v) = c\}$ for the set of vertices with color $c$.
Given a {\flooding} $f$ of $G$, a vertex subset $M \subseteq V$ is \emph{monochromatic}, if $M \subseteq f^{-1}(c)$ for some color $c$.
A monochromatic set $M$ is called a \emph{color component} of $G$, if $M$ is connected in $G$ and no proper superset of $M$ is both monochromatic and connected.
A color $c \in C$ is said to be \emph{connected} with respect to $f$, if $f^{-1}(c)$ forms a single color component.
Otherwise, it is \emph{disconnected}.
Moreover, for every vertex $v \in V(G)$, we denote by $f^{-1}(v)$ the color component containing $v$, which is a subset of $f^{-1}(c)$ for the color $c = f(v)$.
We also call $f^{-1}(v)$ the \emph{\floodarea} of $v$.

With these notions in place, we formally define the flood filling problems introduced earlier.
Let $G$ be a graph with initial {\flooding} $f_0\colon V(G) \rightarrow C$.
A (possibly empty) \emph{free move order} $(v_1,c_1), \dots, (v_s,c_s)$ consists of moves $(v_i,c_i)$, where each $v_i \in V(G)$ is a \emph{pivot vertex} and each $c_i \in C$ is a \emph{flooding color}.
Since free move orders are the main subject of this paper, we often just call them \emph{move order}.
The result of a move $(v_i,c_i)$ is a new {\flooding} $f_i\colon V(G) \rightarrow C$ defined by
\[f_i(v) = \begin{cases}
	c_i, & \text{if } v \in f_{i-1}^{-1}(v_i),\\
	f_{i-1}(v), &\text{otherwise.}
\end{cases}\]
In contrast to free move orders, a \emph{fixed move order} with pivot $p$ satisfies $p = v_1 = \dots = v_s$.
Because the pivot does not change, fixed move orders can be specified simply by the sequence of colors $c_1, \dots, c_s$.

The purpose of a move $(v_i,c_i)$ is typically to change the color of all vertices in the color component $f_{i-1}^{-1}(v_i)$ to $c_i$ so that, with respect to $f_i$, this component merges with parts of $f_{i-1}^{-1}(c_i)$, that is, with vertices that are already colored $c_i$.
To formalize this growth in \emph{territory}, we say that the move $(v_i,c_i)$ \emph{conquers} the vertices $f_i^{-1}(v_i) - f_{i-1}^{-1}(v_i)$ from the pivot $v_i$.
This notion is particularly intuitive in the fixed flooding setting, where all moves conquer territory from the same pivot.

A move order, fixed or free, is called a \emph{strategy} for $G$, $f_0$, and, in the fixed case, $p$, if the resulting {\flooding} $f_s$ is constant.
The {\MinimumFloodIt} problem asks, given $G$, $f_0$, and a pivot $p$, for a shortest fixed strategy, that is, one using the minimum possible number of moves. Analogously, {\MinimumFreeFloodIt} asks for a shortest free strategy.


\section{Restrictions for Graphs free of the {\Gemlets} {\pfive}, {\bull} and {\gem}}
\label{sec:graph_structure}

This section prepares the main result of the paper, which concerns the classification of a large family of cograph generalizations that admit polynomial-time algorithms.
As outlined in the introduction, a substantial portion of these classes has already been identified in the preceding work~\cite{rosenke2026} on the closely related \textsc{Miniature Painting} problem.
A central insight behind the previous approach is that the considered \cogem-free graphs cannot expand arbitrarily and that their vertices are, in a certain sense, restricted to the space around small subgraphs.
More precisely, if $\pi$ is an induced {\pfour} in a \cogem-free graph $G$, then $\pi$ forms a dominating set and thus serves as a kind of \emph{hub}.
From the perspective of {\FreeFloodIt}, this means that, since all vertices of $G$ lie in the neighborhood of $\pi$, a meaningful strategy may first conquer $\pi$ and then flood each remaining color once in order to make $G$ monochromatic.
In fact, we essentially show that there exists a shortest strategy for $G$ that conquers an induced {\pfour}, say $\pi$, within at most twelve moves and subsequently uses $\pi$ as a hub for all remaining moves.
Such strategies can then be brute-forced, yielding a polynomial-time algorithm.

However, while the existence of small, connected dominating sets in \cogem-free graphs provides a powerful starting point, it is by no means a turnkey solution for flood filling problems.
Considerable effort is still required to exploit these structural properties algorithmically.
Accordingly, the aim of the present section is to take a first step toward a similar understanding for a different collection of jewel-free graphs.
Specifically, we establish analogous conditions for $(\pfive, \bull, \gem)$-free graphs, identifying a new type of hub: a localized graph structure that, in a suitable sense, restricts all vertices of the graph to the space around it.

As every $(\pfive, \bull, \gem)$-free graph $G$ can be viewed as a generalized cograph, our overarching objective is to recover the remnants of cograph structure that persist within $G$.
In fact, we will show that, in most cases, $G$ decomposes into a cyclic arrangement of cographs.
To stay within our notional scheme and since cographs are, in the strict sense, free of jewels, we refer to such a structure as a \emph{bead bracelet} with the individual cographs being just \emph{beads}.
More precisely, for a cycle $C_k$ with $k \geq 3$, vertices $v_0, v_1, \dots, v_{k-1}$, and non-empty cographs $H_0, H_1, \dots, H_{k-1}$, the graph $C_k[v_0, v_1, \dots, v_{k-1} \leftarrow H_0, H_1, \dots, H_{k-1}]$ that substitutes the cycle vertices with the beads $H_0, H_1, \dots, H_{k-1}$ is called a \emph{$k$-bead bracelet}.
The following lemma formalizes the connection between this concept and $(\pfive, \bull, \gem)$-free graphs.
\begin{restatable}{lemma}{CFiveCographInjection}
\label{lem:structure:pfive_bull_gem_free:cfive_substitution}
	If $G$ is a connected $(\pfive, \bull, \gem)$-free graph that contains an induced {\cfive}, then $G$ is a $5$-bead bracelet.
\end{restatable}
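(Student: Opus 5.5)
Fix an induced $\cfive$ with vertices $c_0,\dots,c_4$ in cyclic order; indices are taken modulo $5$. The plan is to grow five vertex sets $V_0,\dots,V_4$ with $c_i\in V_i$, satisfying two conditions. First, $V_i$ is complete to $V_{i\pm1}$ and anticomplete to $V_{i\pm2}$. Second, $V_0\cup\dots\cup V_4=V(G)$. Once this holds, $G=C_5[v_0,\dots,v_4\leftarrow G[V_0],\dots,G[V_4]]$. Each bead $G[V_i]$ is then automatically a cograph: an induced $\pfour$ inside $V_i$, together with any vertex of $V_{i+1}$, induces a $\gem$. It remains to show that every vertex of $G$ fits into such a partition.

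\textbf{Step 1: attachments to the cycle.} Take a vertex $x$ outside the $\cfive$ that has a neighbor on it, and let $S=N(x)\cap\{c_0,\dots,c_4\}$. We classify $S$ up to rotation and reflection. If $|S|=1$, say $S=\{c_0\}$, then $x c_0 c_1 c_2 c_3$ is an induced $\pfive$. If $S=\{c_0,c_1\}$, then $x,c_1,c_2,c_3,c_4$ should contain a $\pfive$ or a $\bull$; checking the pattern, $c_4 c_3 c_2 c_1$ together with $x$ adjacent to $c_1$ and $c_4\ldots$ needs care, so I will verify each case by hand. For the remaining types, $S=\{c_0,c_2\}$ and $S$ with $|S|\ge 3$, I would check in the same way that everything is excluded except $S=\{c_{i-1},c_{i+1}\}$. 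For example, $|S|\ge 4$ contains a $\gem$ (some $P_4$ of the cycle plus $x$), and the cases $\{c_0,c_1,c_2\}$ and $\{c_0,c_1,c_3\}$ should produce a $\bull$ or a $\gem$. The surviving type $S=\{c_{i-1},c_{i+1}\}$ is exactly a twin of $c_i$ relative to the cycle, and such an $x$ goes into $V_i$. This finite case analysis is routine but must be carried out exhaustively.

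\textbf{Step 2: consistency between the initial sets.} Let $V_i$ be $c_i$ together with the vertices of type $i$. We need $V_i$ complete to $V_{i+1}$ and anticomplete to $V_{i+2}$.
\begin{itemize}
\item Take $x\in V_0$ and $y\in V_1$. If they were nonadjacent, then $x c_4 c_3 c_2 y$ would be induced: $x\sim c_4$, $c_4\sim c_3$, $c_3\sim c_2$, $c_2\sim y$, with no chords. That is a $\pfive$.
\item Take $x\in V_0$ and $y\in V_2$. If they were adjacent, replacing $c_1$ by $y$ gives the cycle $x,c_1,?$. More simply, $x$ and $y$ both see $c_1$. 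Then $x,y,c_1$ together with $c_4$ and $c_3$ form $x c_4 c_3 y$ plus $c_1$ adjacent to $x$ and $y$, with the edge $xy$, which is a $\bull$/house-type configuration; I expect a $\bull$ or $\gem$ here.
\end{itemize}
Once these hold, the sets $V_i$ give a $5$-bead bracelet on $N[\cfive]$.

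\textbf{Step 3: domination.} We show that every vertex lies in $N[\cfive]$; connectivity then finishes the proof. Suppose not, and take $z$ at distance $2$ from the cycle with a neighbor $x\in V_0$, say, where $x\ne c_0$ is allowed. Then $z x c_1 c_2 c_3$ is induced: $z$ has no neighbor on the cycle, $x\sim c_1$, and $x\not\sim c_2,c_3$ by Step 2. This is a $\pfive$, a contradiction. So $V(G)=N[\cfive]$.

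The same argument also closes the gap in Step 1. A vertex $x\in N(\cfive)$ may have a neighbor $w\in V_j\setminus\{c_j\}$ without being adjacent to $c_j$. This cannot happen, because $w$ behaves like $c_j$: $(V_0,\dots,V_4)$ with $w$ in place of $c_j$ is another induced $\cfive$, so the classification of Step 1 applies to it as well.

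The main obstacle is the exhaustive adjacency case analysis of Steps 1 and 2. In particular, for $S=\{c_0,c_1\}$ it is not obvious which jewel appears, because a house is allowed. We may need to exclude this type indirectly. Note that $x c_1 c_2 c_3 c_4$ has $x\sim c_1$, and $x\sim c_0\sim c_4$ with $x\not\sim c_4$, so $c_2 c_3 c_4 c_0 x$ is a path. Its chords are only $c_0c_1$ and $xc_1$, so $\{x,c_0,c_2,c_3,c_4\}$ induces the path $x c_0 c_4 c_3 c_2$. That is a $\pfive$, so this type is excluded.
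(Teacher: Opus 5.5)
Your plan has the same architecture as the paper's proof: classes defined by how a vertex attaches to the fixed $C_5$, complete or anticomplete joins between classes, a $P_5$ argument for domination, and a gem argument for the beads. Several pieces are correct. These are the $P_5$ excluding an adjacent pair $\{c_0,c_1\}$, the $P_5$ $x c_4 c_3 c_2 y$ showing $V_0$ complete to $V_1$, and Step 3. Your $P_5$ argument for completeness is slicker than the paper's, which first rules out $xv_i$ and $yv_j$ by bulls.

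However, the classification in Step 1 is false. The attachment set $S=\{c_{i-1},c_i,c_{i+1}\}$ is \emph{not} excluded. In $C_5[c_1\leftarrow K_2]$ the extra vertex $x$ sees exactly $c_0,c_1,c_2$. The five $5$-vertex induced subgraphs containing $x$ are two kites, two co-banners and a $C_5$, so there is no $P_5$, bull or gem. This graph is a genuine $5$-bead bracelet, and $x$ is simply a neighbor of $c_1$ inside bead $1$. So the classes must be $V_i=N(c_{i-1})\cap N(c_{i+1})$ with adjacency to $c_i$ left free, which is exactly the paper's $N_i$. The surviving types are $\{c_{i-1},c_{i+1}\}$ and $\{c_{i-1},c_i,c_{i+1}\}$.

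This is not a cosmetic slip. With your ``twin'' classification, the swap trick of your last paragraph, applied with $x,w$ in the same class, would prove that every bead is edgeless. Relatedly, your claim ``this cannot happen'' is false when $w$ lies in the class of $x$.

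Second, the anticompleteness of $V_0$ and $V_2$ is not proved. Take $x\in V_0$, $y\in V_2$ with $xy\in E(G)$. Your set $\{x,y,c_1,c_3,c_4\}$ induces the $C_4$ $x c_4 c_3 y$ plus $c_1$ adjacent to $x$ and $y$. That is a house, which is allowed, so no contradiction arises there. The missing ingredient is $c_0$, together with a case split that becomes unavoidable once Step 1 is corrected:
\begin{enumerate}
\item $x=c_0$ or $y=c_2$ contradicts the definition of the classes.
\item If $x\not\sim c_0$, then the path $c_0c_1yc_3$ with tip $x$ adjacent to $c_1,y$ is a bull.
\item If $x\sim c_0$, then the path $c_4c_0c_1y$ with tip $x$ is a gem.
\end{enumerate}
Alternatively, your swap idea does work here after the correction. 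Replacing $c_2$ by $y$ gives another induced $C_5$. On it, $x$ would see two consecutive vertices plus the vertex non-adjacent to both (a bull), or four vertices (a gem). With these two repairs, the remaining steps and the gem argument for the beads go through as you wrote them.
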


The underlying {\cfive} keeps the surrounding cographs in proximity and therefore serves as an effective flooding hub.
Building on this observation, Section~\ref{sec:Polytime_Bead_Bracelets} develops a new polynomial-time approach that exploits this hub structure.
On the other hand, there also exist $(\pfive, \bull, \gem, \cfive)$-free graphs $G$.
Such graphs behave even more like cographs.
In particular, connected cographs are typically rich in dominating edges.
This property carries over to the present generalization and, consequently, every such graph contains at least one dominating edge.
\begin{restatable}{lemma}{EdgeCase}
\label{prp:structure:pfive_bull_gem_free:dominating_pfour}
	If $G$ is a connected $(\pfive, \bull, \gem, \cfive)$-free graph of at least two vertices then $G$ contains a dominating edge.
\end{restatable}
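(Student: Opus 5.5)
We plan to use the Bacsó–Tuza theorem: every connected \pfive-free graph has a dominating clique or a dominating induced $P_3$. Fix such a dominating structure in $G$. In each case we want to shrink it to a dominating edge, using only the forbidden subgraphs {\cfive}, {\bull} and {\gem} (besides \pfive). The trivial case comes first: if $G$ has a dominating vertex $v$, then $v$ together with any neighbour is a dominating edge. Such a neighbour exists because $G$ is connected and has at least two vertices.

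\emph{Dominating induced $P_3$ $abc$.} We claim that $ab$ or $bc$ dominates. Suppose not. Then some $x$ is seen by none of $a,b$, and since $abc$ dominates, $x$ sees $c$. Symmetrically, some $y$ is seen by none of $b,c$ and sees $a$. Then $x\neq y$ and $x,c,b,a,y$ is an induced path with the only open pair being $xy$. If $xy\notin E(G)$ this is an induced \pfive; if $xy\in E(G)$ it is an induced \cfive. Both are excluded, so this case is quick.

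\emph{Dominating clique $K$ with $|K|\ge 3$, chosen inclusion-minimal.} By minimality, every $k\in K$ has a private neighbour $p_k$, i.e.\ a vertex outside $K$ adjacent to $k$ and to no other vertex of $K$. If some $p_i,p_j$ are non-adjacent, then $p_i k_i k_j p_j$ is an induced \pfour, and any third $k_\ell\in K$ sees exactly its two middle vertices. This is a \bull, a contradiction. Hence the private neighbours are pairwise adjacent, and $K$ together with the $p_k$ contains an induced prism-like configuration. Here we aim to show that the edge $k_1p_1$ dominates $G$. Take any $z$ seeing neither $k_1$ nor $p_1$; then $z$ sees some $k_j$ with $j\neq 1$, say $k_2$. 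The path $z k_2 k_1 p_1$ is an induced \pfour.
\begin{itemize}
\item Any further $k_\ell$ not adjacent to $z$ would form a \bull\ with this path. So $z$ sees all of $K-k_1$.
\item Next we bring in $p_2$ and $p_3$, which see $p_1$ but not $k_1$. Depending on whether $z$ sees $p_2$ or $p_3$, we look for an induced \gem, \bull, \cfive\ or \pfive. Two candidates are the path $z k_3 k_1 p_1$ extended by $p_2$, and the $5$-vertex sets $\{z,k_2,k_1,p_1,p_2\}$ and $\{z,k_3,k_1,p_1,p_3\}$.
\end{itemize}

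The main obstacle is this last case analysis. Some configurations there yield only a \house, which is not forbidden; for example, the prism itself appears but does have a dominating edge. If a single choice of edge fails, the first fallback is to choose the edge $k_ip_i$ more cleverly, e.g.\ with $k_i$ of maximum degree. Alternatively, we would run a minimal-counterexample argument: pick an edge $uv$ maximizing $|N[u]\cup N[v]|$, take an undominated vertex $z$ and a shortest path from $z$ to $\{u,v\}$, and derive an induced \pfive, \cfive, \bull\ or \gem\ from the neighbourhoods of $u$ and $v$.
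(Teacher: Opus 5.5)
\textbf{There is a gap in the final step, though it closes easily.} The parts you complete are correct: the reduction to Bacs\'o--Tuza, the dominating-vertex case, the induced-$P_3$ case, the existence of private neighbours, and the bull argument that makes them pairwise adjacent. What is missing is the proof that $k_1p_1$ dominates $G$. You leave this open and list fallbacks you do not develop.

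The two $5$-sets you single out, $\{z,k_2,k_1,p_1,p_2\}$ and $\{z,k_3,k_1,p_1,p_3\}$, cannot work. Each pairs $k_j$ with its own private neighbour $p_j$, so it contains the induced $C_4$ on $k_j,k_1,p_1,p_j$ and can only give a banner or a house. The missing idea is to mix indices, which is your other candidate.
\begin{itemize}
\item Let $z\notin N[k_1]\cup N[p_1]$. Then $z\notin K$, so $z$ sees some $k_\ell$ with $\ell\neq 1$. Since $|K|\geq 3$, choose $j\notin\{1,\ell\}$.
\item The pairs $zk_\ell$, $k_\ell k_1$, $k_1p_1$, $p_1p_j$ are edges; the last one comes from your bull argument.
\item The pairs $zk_1$ and $zp_1$ are non-edges by the choice of $z$, and $k_\ell p_1$, $k_\ell p_j$, $k_1p_j$ are non-edges by privacy. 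The same adjacencies show the five vertices are pairwise distinct.
\item Hence $z,k_\ell,k_1,p_1,p_j$ induce a $P_5$ if $zp_j\notin E(G)$ and a $C_5$ if $zp_j\in E(G)$. Both are excluded, so $k_1p_1$ dominates $G$.
\end{itemize}
No case analysis on $p_2,p_3$ is needed, and your bull step showing that $z$ sees all of $K-k_1$ becomes unnecessary.

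\textbf{Comparison with the paper.} With this repair your proof is complete and takes a genuinely different route. The paper uses no external domination theorem. It handles cographs via the join decomposition. It then builds a dominating induced $P_4$ by hand through its guard-and-fellow argument, which uses gem-, bull- and $P_5$-freeness. Finally it shows that the middle edge of a dominating $P_4$ dominates, using the same $P_5$/$C_5$ dichotomy as your $P_3$ case.

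Your version hands the structural work to Bacs\'o--Tuza. A non-induced dominating $P_3$ spans a triangle, so the clique\,/\,induced-$P_3$ form you quote is equivalent to the usual statement. Your proof is much shorter and never uses gem-freeness, so it actually covers all connected $(P_5,\text{bull},C_5)$-free graphs. The paper's argument is longer but self-contained.
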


The situation described by the lemma is, quite literally, an \emph{edge case} and therefore calls for a separate treatment, which is carried out in the following section.


\section{The Edge Case}
\label{sec:Polytime_Edge_Case}

Trivially, free flooding in the \emph{edge}-case has a lower bound of $|C|-1$ moves.
Recall that $C$ is the image of the flooding $f_0$, or, more intuitively, the initial set of colors in $G$.
Moreover, an upper bound of $|C|$ moves follows immediately, since one can always conquer a dominating edge $e$ in a single move and then, using a pivot in $e$, flood all colors of $C$ but one to make the graph monochromatic.
As these are the only two possible cases, it suffices to decide whether a given graph $G$ under an initial {\flooding} $f_0$ admits a free strategy of $|C|-1$ moves.
If this is not the case, we can compute a dominating edge $e$ in polynomial time and fall back on the $|C|$-move strategy that exploits $e$.

Accordingly, this section focuses on a polynomial-time approach to determine a free strategy $S$ with $|C|-1$ moves, if one exists.
This approach, surprisingly, requires no restrictions on the given graph.
To achieve this minimum number of moves, the strategy $S$ must \emph{eliminate} exactly one color $c$ in each move.
More precisely, this means that while the color $c$ appears on some vertices before a move $i$, no vertex is colored with $c$ after that move.
Clearly, eliminating a color $c$ requires that the $c$-colored vertices form a single color component $f^{-1}_{i-1}(c)$ before move $i$, and that move $i$ uses a pivot vertex $v_i \in f^{-1}_{i-1}(c)$.

Consequently, every move of the strategy $S$ floods a color $c$ from a pivot in the color component $f^{-1}_{i-1}(c')$ of a connected color $c' \not= c$ and, in doing so, may create the now connected color $c$.
Our key observation is that, as long as this principle is respected, the precise ordering of moves is of secondary importance.
More specifically, we argue that a free strategy of this type can be obtained by repeatedly selecting a suitable disconnected color $c$, determining which connected colors $c'$ must be eliminated to make $c$ connected, and then performing the corresponding moves in arbitrary order.
Furthermore, this selection process can be done greedily.
Any disconnected color whose vertices are only separated by connected colors can be made connected through these moves, immediately.

To formalize this idea in Algorithm~\ref{alg:minimal_strategy}, we introduce the following notions used there.
A set $\Gamma \subseteq C$ of colors is said to \emph{connect} a color $c \in C$, if all pairs $v, w \in f_0^{-1}(c)$ are connected in the graph
$G\big[f_0^{-1}(c) + \bigcup_{\gamma \in \Gamma} f_0^{-1}(\gamma)\big]$.
Note that this induced subgraph does not need to be connected.
If it is, however, and $c \notin \Gamma$, we call the set $\{ f_0^{-1}(\gamma) \mid \gamma \in \Gamma \}$ a \emph{color region of $c$}.
A move that connects a previously disconnected color $c$ is called the \emph{connecting move} for $c$.
Accordingly, a move order that contains a connecting move for $c$ is also said to \emph{connect} $c$.
\begin{algorithm}
	\caption{Polynomial-time algorithm that returns a free strategy of $|C|-1$ moves, if one exists, and otherwise $\bot$.}
	\label{alg:minimal_strategy}
	\begin{algorithmic}[1]
		\Procedure{minimal\_strategy}{$G, f_0$}
			\State $S \gets $ empty move order
			\State $C' \gets C$ \Comment{Colors that have not been eliminated}
			\State $\Gamma \gets \{ c \in C \mid \text{c is connected} \}$ \Comment{Colors that have successfully been connected}
			\While{$\Gamma \neq C$}
				\State $\Delta \gets \{ c \in C \mid \Gamma \text{ connects } c\} - \Gamma$ \Comment{Colors that can be connected optimally}
				\If{$\Delta = \emptyset$}
					\State \Return $\bot$
				\EndIf
				\For{$c \in \Delta$}
					\State $P \gets $ a color region of $c$
					\State $S \gets S$ followed by a move $(v_R, c)$ for each $R \in P$ with arbitrary $v_R \in R$
					\State $\Gamma \gets \Gamma + c$ \Comment{$c$ is now connected}
					\State $C' \gets C' - \{ c \mid P \text{ contains a $c$-colored region}\}$ \Comment{Remove eliminated colors}
				\EndFor
			\EndWhile
			\State $c \gets $ an arbitrary value from $C'$ \Comment{This will be the last remaining color}
			\State \Return $S$ + a move $(v_R, c)$ for each remaining non-$c$-colored color component
		\EndProcedure
	\end{algorithmic}
\end{algorithm}
We state the main result of this section in the following theorem.
\begin{restatable}{theorem}{ThmMinimalStrategy}
	\label{thm:minimal_strategy}
	For every graph $G$ with initial {\flooding} $f_0$, Algorithm~\ref{alg:minimal_strategy} computes in polynomial time a free strategy of $|C|-1$ moves for $G$, if one exists, and $\bot$, otherwise.
\end{restatable}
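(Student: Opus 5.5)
We will prove Theorem~\ref{thm:minimal_strategy} in three parts: soundness (a returned move order is a strategy of exactly $|C|-1$ moves), completeness (if a $(|C|-1)$-move strategy exists, the algorithm does not return $\bot$), and running time. The central structural fact is the one stated before the algorithm. In a strategy of $|C|-1$ moves, every move must eliminate a color $c'$, so it floods the whole color class $f^{-1}_{i-1}(c')$ of a connected color. Consequently, the flooding in every intermediate step is a \emph{coarsening} of $f_0$: each current color component is a union of initial color classes $f_0^{-1}(\gamma)$. We first make this precise by induction on the moves. Each move recolors an entire current class, which is itself a union of initial classes, so every $f_i$ is constant on each $f_0^{-1}(\gamma)$.

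For completeness, fix an optimal strategy $S^*$ with $|C|-1$ moves. Every color other than the final one must at some point be connected (namely when it is eliminated), and so must the final color. For a color $c$ that is initially disconnected, consider the connecting move for $c$ in $S^*$. By the coarsening property, the $c$-colored vertices at that moment are $f_0^{-1}(c)$ together with the initial classes of colors $\gamma$ eliminated into $c$ earlier. Each such $\gamma$ was connected at the time it was eliminated, so it had been connected before $c$. Hence the set $\Gamma_c$ of colors involved connects $c$ and consists of colors connected earlier in $S^*$.

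We then argue by induction, ordering colors by when $S^*$ connects them, that each color eventually enters the algorithm's $\Gamma$. The key point is monotonicity: the predicate ``$\Gamma$ connects $c$'' is monotone in $\Gamma$, and $\Gamma$ only grows. So the greedy loop reaches a fixpoint containing every color that $S^*$ connects, which is all of $C$. The loop therefore never returns $\bot$ while $S^*$ exists.

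For soundness, we must show two things. First, the moves appended for each $c \in \Delta$ are legal and eliminate exactly the colors in the chosen color region $P$. Second, the total move count is $|C|-1$. This is the main obstacle: the algorithm reasons about the initial flooding $f_0$, while the moves act on the current flooding. We also need that each color is eliminated at most once, even though different $c \in \Delta$ might choose overlapping regions.

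To handle this, we maintain the invariant that every color $\gamma \in \Gamma$ is currently connected, and that its current component is a union of initial classes. A color region for $c$ can then be realized move by move: we flood $c$ into each region $R$ that is still present, and use the connectedness of $G[f_0^{-1}(c) + \bigcup_{\gamma\in P} f_0^{-1}(\gamma)]$ to get the connection. Overlaps with already-eliminated colors must be handled by choosing the region, or replacing $R$, by the current component that now contains those vertices. We need to check that this never costs an extra move. The approach is to note that each appended move removes one color from $C'$. Since the loop ends with all colors connected, the final line then needs exactly $|C'|-1$ moves, one per remaining connected color other than $c$. This gives $|C|-1$ moves in total, and we expect most of the care to go into this accounting.

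Polynomial running time is immediate. The while loop runs at most $|C|$ times, and each iteration consists of connectivity checks on induced subgraphs together with finding a color region. A color region can be found by taking the component of $G[f_0^{-1}(c) + \bigcup_{\gamma\in\Gamma} f_0^{-1}(\gamma)]$ containing $f_0^{-1}(c)$ and collecting the colors it uses.
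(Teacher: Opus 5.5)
Your completeness argument is the paper's. The earliest color that an optimal strategy connects but the algorithm does not has all its connecting colors connected earlier, so monotonicity of ``connects'' forces it into $\Delta$. Your coarsening observation is what makes rigorous the paper's claims that $W$ connects $c$ and that all colors of $W$ are connected before $c$. (Like the paper, you tacitly need $G$ to be connected when you say the final color must become connected. For an edge $xy$ colored $a,b$ plus an isolated $a$-colored vertex, $(y,a)$ is a one-move strategy, yet the algorithm returns $\bot$.)

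The gap is in soundness. You correctly see that the moves act on the current flooding while $P$ is defined through $f_0$; the paper's remark that $P$ contains no $c$-colored region concerns initial colors and does not address this. But you then only announce the conclusion (``each appended move removes one color from $C'$''), and this is exactly the step that fails for Algorithm~\ref{alg:minimal_strategy} as written. Take the tree with edges $x_1y$, $yx_2$, $x_2z_2$, $z_1y$ and $f_0(y)=1$, $f_0(x_1)=f_0(x_2)=2$, $f_0(z_1)=f_0(z_2)=3$. The algorithm plays $(y,2)$ and is then forced to use $P=\{\{y\},\{x_1,x_2\}\}$ for color $3$. Its two moves for these regions hit the same current component $\{x_1,y,x_2\}$, so it outputs three moves although $(y,2),(y,3)$ is a strategy. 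You are therefore really proving the statement for a corrected algorithm, and the correction has to be verified, not just indicated.

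To close the gap, first restate your invariant, since ``every $\gamma\in\Gamma$ is currently connected'' is false once $\gamma$ has been eliminated. The right form has two halves:
\begin{enumerate}
\item every color of $\Gamma$ that is still present has a connected current class which is a union of initial classes;
\item every color outside $\Gamma$ has never been used for flooding nor been eliminated, so its current class is its initial class.
\end{enumerate}
The second half is the ingredient you are missing. It guarantees that, when $c$ is processed, $c$ is present with class exactly $f_0^{-1}(c)$. It also guarantees that every initial class of a color in $\Gamma$ currently carries a color of $\Gamma$, so nothing you flood is $c$-colored.

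Now let $K$ be the component of $G[f_0^{-1}(c)+\bigcup_{\gamma\in\Gamma}f_0^{-1}(\gamma)]$ containing $f_0^{-1}(c)$. Flood with $c$ each \emph{distinct} current class meeting $K-f_0^{-1}(c)$, exactly once.
\begin{itemize}
\item Each such class is the whole class of a different present color of $\Gamma$ other than $c$, so every move eliminates exactly one color.
\item Each of these classes is connected and meets the connected set $K$, so their union with $f_0^{-1}(c)$ is connected. This re-establishes the invariant with $c$ added to $\Gamma$.
\end{itemize}

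Finally, $C'$ must stay equal to the set of present colors. Removing only the initial colors of an arbitrary color region $P$ can leave an eliminated color in $C'$; this happens if $P$ contains the class of $\gamma$ but not that of the color into which $\gamma$ was merged earlier. Choosing that color as the final one then costs an extra move. Taking $P$ to be all initial classes inside $K$, as your running-time paragraph does, avoids this. With these points, your count $(|C|-|C'|)+(|C'|-1)=|C|-1$ is justified.
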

While the above result is only concerned with the free flood filling game, it is easy to see that a very similar but far simpler approach works for {\FIt}, too.
There, we would only have a single connected region from which we would be allowed to flood the graph.
So, we could only ever use one type of connecting move for a color $c$: flood the fixed pivot with $c$.
If, at any point, while greedily applying these moves, we do not have a single color available that can be connected by a move like that, the graph cannot be fixed-flooded in $|C| - 1$ moves.

In the context of free flood filling $(\pfive,\bull,\gem)$-free graphs, however, the above result settles the \emph{edge}-case, as formalized in the following corollary.

\begin{corollary}
	\label{cor:edge_case}
	For every graph $G$ that has a dominating edge and for any initial {\flooding} $f_0$, a shortest free strategy for $G$ can be computed in polynomial time.
\end{corollary}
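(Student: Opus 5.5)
The corollary follows by pairing the lower bound $|C|-1$ with the upper bound $|C|$ and using Theorem~\ref{thm:minimal_strategy} to decide which one is attained.

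For the lower bound, take any free strategy for $G$ and $f_0$. The initial flooding uses exactly $|C|$ colors and the final one uses exactly one. A single move recolors one color component with the flooding color. Afterwards, the only color that can have disappeared is the old color of that component, and no new colors are introduced. So the number of present colors drops by at most one per move, and every strategy has at least $|C|-1$ moves.

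For the upper bound, let $e=uv$ be a dominating edge of $G$. Such an edge can be found in polynomial time by testing each of the $\cO(|E(G)|)$ edges for $N[\{u,v\}]=V(G)$.
\begin{itemize}
\item If $f_0(u)\neq f_0(v)$, the first move $(u,f_0(v))$ makes $u$ and $v$ lie in a common color component.
\item If $f_0(u)=f_0(v)$, this holds already and the first move is skipped.
\end{itemize}
From then on, the flooded area $A$ of $u$ contains $e$ and therefore dominates $G$. Enumerate the colors of $C$ that are still present and differ from the current color of $A$, and flood each of them once from pivot $u$. Each such move keeps $A$ containing $e$, and it absorbs into $A$ every vertex of the flooded color, since each such vertex is adjacent to $u$ or $v$. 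Flooding a color never reintroduces an absent one, so after these moves every vertex has been absorbed and $G$ is monochromatic. The number of moves is at most $1+(|C|-1)=|C|$. In the case $f_0(u)\ne f_0(v)$, the first move already removes the color $f_0(u)$ unless it reappears elsewhere, and in any case the total stays within $|C|$.

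The optimum is therefore either $|C|-1$ or $|C|$. The algorithm runs Algorithm~\ref{alg:minimal_strategy} on $G,f_0$. By Theorem~\ref{thm:minimal_strategy} this takes polynomial time and returns a free strategy with $|C|-1$ moves exactly when one exists; in that case this strategy is optimal by the lower bound. If the algorithm returns $\bot$, no strategy with $|C|-1$ moves exists, so the optimum is $|C|$, and the dominating-edge strategy above achieves it and is computed in polynomial time.

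The main obstacle is to verify that every move after the first absorbs all vertices of its flooded color. This rests on two facts:
\begin{itemize}
\item every vertex outside $A$ is adjacent to $u$ or $v$, and $u,v\in A$;
\item a vertex carrying the flooded color $c$ is therefore adjacent to $A$, which now has color $c$, so it merges into the flooded area.
\end{itemize}
Everything else is immediate from Theorem~\ref{thm:minimal_strategy}.
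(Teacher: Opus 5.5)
Your proposal is correct and follows the paper's own argument: the trivial $|C|-1$ lower bound, the $|C|$-move upper bound obtained by conquering the dominating edge and then flooding every remaining color once from it, and Theorem~\ref{thm:minimal_strategy} to decide whether $|C|-1$ is attainable. One minor wording point: a move can reintroduce a previously eliminated color of $C$, so ``no new colors are introduced'' is imprecise, but your conclusion that the number of present colors drops by at most one per move still holds.
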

Notice that our result is effectively stronger than required for our case study.
In fact, any graph class that has an upper bound on the length of free strategies of at most $|C|$ steps is in polynomial time by the same arguments.


\section{Monochromatic Bead Bracelets in Polynomial Time}
\label{sec:Polytime_Bead_Bracelets}

The final step of our approach is to compute optimal free strategies for $5$-bead bracelets in polynomial time.
Indeed, this component of our approach originates from the first author's master's thesis~\cite{masterthesis}.
From a conservative perspective, this problem is already resolved, as an elementary observation shows that $5$-bead bracelets are \cogem-free.
However, bead bracelets exhibit significantly richer structural properties than the mere presence of dominating {\pfour}.
This makes them a worthwhile object of study for more efficient algorithms -- the method from \cite{rosenke2026} works in polynomial time but with a prohibitively large exponent.
Additionally, we hope that our analysis of this case leads to further insights toward potential generalizations.

As outlined in Section~\ref{sec:graph_structure}, the method nevertheless builds on the {\cfive}-hub structure to quickly capture a small dominating set and, from there, flood the entire graph color by color.
Interestingly, this can be achieved in an even simpler way than in the \emph{edge} case.
In the remainder of this section, let $G$ be a $5$-bead bracelet constructed from beads $H_0, H_1, \dots, H_4$.
For readability, all indices used to refer to beads and related graph elements are taken modulo five without further mention.

The key observation is that disconnected colors cannot be located as arbitrarily as in the previous section.
If a color $c$ is, with respect to $f_0$, disconnected in $G$, then there exist vertices $v$ and $w$ in distinct color components of $f^{-1}_0(c)$.
Since vertices from neighboring beads are adjacent, it follows that $v$ and $w$ must lie either in the same bead or in non-neighboring ones.
The fundamental {\cfive} of $G$ leaves little room and, consequently, vertices $v$ and $w$ from non-adjacent beads always have distance exactly two.

A second central insight is that, for every $k \in \{0,1,\dots,4\}$, any vertex triple $u \in V(H_k)$, $v \in V(H_{k+1})$, and $w \in V(H_{k+2})$ taken from three consecutive beads forms a dominating set in $G$.
By the full join of neighboring beads, we have
$H_{k-1} + H_{k+1} \subseteq N(u)$,
$H_{k} + H_{k+2} \subseteq N(v)$, and
$H_{k} + H_{k+3} \subseteq N(w)$,
and hence $N[u] + N[v] + N[w] = V(G)$.
As in the previous section, this yields $|C|+1$ as an upper bound on the length of a shortest strategy for $G$.
Indeed, at most two moves suffice to conquer the triple $u, v, w$, and from this dominating set, $|C|-1$ further moves are sufficient to flood every color except one.

As above, the main objective of an algorithm is to distinguish between the possible length cases of a shortest strategy, namely $|C|-1$, $|C|$, or $|C|+1$ moves.
In fact, we show that the third case never occurs.
Before doing so, however, we need to decide whether a given $5$-bead bracelet $G$ with initial {\flooding} $f_0$ can be made monochromatic in the optimal $|C|-1$ moves.
While this could be determined using Algorithm~\ref{alg:minimal_strategy} from Section~\ref{sec:Polytime_Edge_Case}, the situation here is considerably simpler, again.
We therefore give a direct characterization.

\begin{restatable}{lemma}{OptimalBeadStrategy}
\label{lem:bead_CminusOne_strategy}
	If $G$ is a bracelet of $5$ beads $H_0, H_1, \dots, H_4$, then, on initial {\flooding} $f_0$, there exists a free strategy of $|C|-1$ moves for $G$ if and only if there are distinct $i, j \in \{0,1,\dots,4\}$ such that, with respect to $f_0$, the bead $H_i$ contains a connected color and
	\begin{itemize}
		\item $H_j$ contains a (not necessarily different) connected color, or
		\item $H_j$ contains a disconnected color $c$ such that $f^{-1}_0(c) + V(H_i)$ is connected in $G$.
	\end{itemize}
\end{restatable}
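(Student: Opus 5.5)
The plan is to prove the two directions separately. The necessity direction will come from examining the first move that does more than eliminate an already connected color. The sufficiency direction will come from running the greedy idea behind Algorithm~\ref{alg:minimal_strategy}, using the $5$-bead structure to show it never gets stuck. Throughout, I use the facts from Section~\ref{sec:Polytime_Edge_Case}. A strategy of $|C|-1$ moves must eliminate one color per move. Every pivot lies in a currently connected color. Consequently, a set of colors eliminated before some color $c$ becomes connected must \emph{connect} $c$ in the sense defined there, using the original classes $f_0^{-1}(\gamma)$.

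\emph{Necessity.} Fix a strategy $S$ with $|C|-1$ moves. First suppose every color is connected under $f_0$. Then every bead contains a connected color, and the first alternative holds for any $i \neq j$. Otherwise some color is disconnected. Let $c$ be the first color connected by $S$. All moves before its connecting move eliminate colors that were connected under $f_0$; denote the set of these colors by $\Gamma$. Since $\Gamma$ connects $c$, it is non-empty.

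The key step is a dichotomy on where the vertices of $\bigcup_{\gamma\in\Gamma} f_0^{-1}(\gamma)$ lie.
\begin{itemize}
\item If they meet two distinct beads $H_i$ and $H_j$, then both beads contain a connected color, and the first alternative holds.
\item If they all lie in a single bead $H_i$, then $f_0^{-1}(c)+\bigcup_{\gamma\in\Gamma} f_0^{-1}(\gamma)$ is a subset of $f_0^{-1}(c)+V(H_i)$. This subset is connected as far as the components of $c$ are concerned, so $f_0^{-1}(c)+V(H_i)$ is connected, since adding the remaining vertices of $H_i$ cannot disconnect it, as each lies in $H_i$, which is joined to $H_{i\pm1}$, and every component of $c$ touching $H_i$ is already linked.
\end{itemize}
In the second case it remains to find the bead $H_j$: take $H_j \neq H_i$ to be any bead containing a vertex of $c$. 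If $c$ lives only inside $H_i$, I expect a small extra argument using the cograph structure of $H_i$; this is a point to double-check.

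\emph{Sufficiency.} I construct the strategy greedily and show that, under the hypothesis, the set $\Gamma$ of connected colors in Algorithm~\ref{alg:minimal_strategy} eventually becomes all of $C$. The key structural claim is the following. Once the connected colors, together with the already processed colors, cover vertices in two distinct beads and induce a connected region $R$, every still disconnected color $c$ is connected by $\Gamma$. The reason is that the components of $f_0^{-1}(c)$ lie in a common bead or in non-neighboring beads at distance two. Their common neighbors lie in beads adjacent to them, and a region spanning two beads of the bracelet supplies a path through those neighbors.

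From the hypothesis I obtain such a starting configuration.
\begin{itemize}
\item In the first alternative I use the connected colors in $H_i$ and $H_j$ directly, after showing that flooding one into the other creates a connected region touching both beads. I check this separately for adjacent beads ($i,j$ neighbors, joined directly) and for non-adjacent beads ($i,j$ at distance two, where one intermediate color elimination is needed).
\item In the second alternative, eliminating the connected colors of $H_i$ into $c$ connects $c$, and $c$ itself then touches $H_i$ and $H_j$.
\end{itemize}
Afterwards the remaining colors are handled exactly as in the final line of the algorithm, so the total number of moves is $|C|-1$.

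The main obstacle I expect is the structural claim in the sufficiency direction. It requires a careful case analysis of the positions of the two beads relative to the components of each disconnected color. The delicate case is when a component lies in a bead whose vertices are non-adjacent to the current region, because beads need not be connected cographs. There I would first try to show that any two beads touched by the region have neighborhoods covering all five beads in the cycle sense, and then handle vertices inside the same bead via the full joins to neighboring beads.
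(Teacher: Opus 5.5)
Both directions have a genuine gap, and in each case it sits at the step you flagged.

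\textbf{Necessity.} Taking $c$ to be the \emph{first} color connected by $S$ does not work. Nothing prevents $c$ and all of $\Gamma$ from lying entirely inside one bead $H_i$. Then $f_0^{-1}(c)+V(H_i)=V(H_i)$ may be disconnected, and there is no bead $H_j$ at all. The cograph structure of $H_i$ cannot help here, since the lemma holds for arbitrary beads. What is missing is the right choice of $c$.

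First dispose of the case that the colors connected under $f_0$ meet two beads; there the first alternative holds. Otherwise they all lie in a single bead $H_i$. Now take $c$ to be the first color connected by $S$ \emph{among those having a vertex outside $H_i$}. Such a color exists, because vertices outside $H_i$ must eventually be eliminated or carry the final color.

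In a $(|C|-1)$-move strategy, color classes change only by absorbing whole connected classes. Hence every connected class before that move lies in $H_i$, and in particular so does the pivot region. The connected class created by the move must therefore leave $H_i$ through an original $c$-vertex in $H_{i\pm1}$. That vertex sees all of $H_i$, so $f_0^{-1}(c)+V(H_i)$ is connected, while $c$ is disconnected under $f_0$. This is essentially the paper's proof, which also focuses on the first move that conquers a vertex outside $H_i$.

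\textbf{Sufficiency.} Your structural claim is false as stated, and so is the proposed repair. A connected region $R$ touching two \emph{adjacent} beads $H_k,H_{k+1}$ dominates only $H_{k-1},\dots,H_{k+2}$, never $H_{k+3}$.

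For a counterexample, give $H_0=\{r_0\}$ and $H_1=\{r_1\}$ color $a$, give $H_2=\{s\}$ and $H_4=\{t\}$ color $b$, and let $H_3$ consist of two non-adjacent vertices of color $c$. Then $\Gamma=\{a\}$ yields a connected region spanning $H_0,H_1$, but $\{a\}$ does not connect $c$; only $\{a,b\}$ does.

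The correct statement needs one more round. Once $R$ spans $H_k,H_{k+1}$, the color $d$ of any vertex $x\in V(H_{k+2})$ either already lies in $\Gamma$ or is connected by $\Gamma$. The reason is that no $d$-component can lie inside $H_{k+3}$, since it would be adjacent to $x$. Every other $d$-component meets $H_{k-1}+H_k+H_{k+1}+H_{k+2}$ and therefore sees $R$. After adding $d$, the region spans three consecutive beads, which dominate $G$, and every remaining color is connected by $\Gamma$.

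With this fix, your greedy route through Theorem~\ref{thm:minimal_strategy} is a valid alternative to the paper's argument. The paper instead builds a monochromatic dominating path $u,v,w$ in three consecutive beads directly, using at most two color-eliminating moves, and then floods from it.
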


If $G$ and $f_0$ do not admit a free strategy of $|C|-1$ moves, then we show that there always exists a strategy of length $|C|$.
The underlying idea is the following.
Either a dominating three-vertex path can be conquered in a single move, after which $|C|-1$ additional moves suffice to flood all colors, or conquering such a dominating path requires two moves but, in doing so, one color can be eliminated simultaneously.
In the latter case, only $|C|-2$ further moves are needed to flood the remainder of the graph.

More precisely, our polynomial-time approach for free-flooding $5$-bead bracelets works as follows.
It begins by checking the conditions for the existence of a strategy with $|C|-1$ moves constructing such a strategy if they are satisfied.
If this attempt fails, the algorithm checks if any two non-neighboring beads -- say $H_0$ and $H_2$ -- contain equally colored vertices $u \in V(H_0)$ and $w \in V(H_2)$.
Then a dominating path $D = u,v,w$ can be created in a single move by flooding the color $f_0(u)$ from any pivot vertex $v \in V(H_1)$.

Otherwise, all non-neighboring beads are color-disjoint, and we require two moves to produce $D = u,v,w$:
first, flooding the color $f_0(u)$ for some $u \in V(H_0)$ from an arbitrary pivot vertex $v \in V(H_1)$ and, second, flooding the color $f_0(w)$ for some $w \in V(H_2)$ from the same pivot $v$.
With respect to $f_1$, the {\floodarea} $f^{-1}_1(c)$ forms a single color component containing $v$, so that the second move recolors all vertices of this component.
Thus, these two moves already eliminate the color $c = f_0(u)$.

Using the dominating path $D$ as a hub, the remaining colors can then be flooded in the standard way.
Consequently, the algorithm correctly returns a free strategy of $|C|$ moves.

\begin{restatable}{lemma}{BeadAlgorithm}
\label{lem:bead_polytime}
	For every $5$-bead bracelet $G$ with initial {\flooding} $f_0$, a shortest free strategy for $G$ can be computed in polynomial time.
\end{restatable}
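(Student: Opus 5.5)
The plan is to show that a shortest free strategy always has length $|C|-1$ or $|C|$, and that we can decide in polynomial time which case holds while constructing a witness. Since each move eliminates at most one color, $|C|-1$ is a lower bound. Two cases follow.

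\emph{Case $|C|-1$.} Lemma~\ref{lem:bead_CminusOne_strategy} gives a condition that is checkable in polynomial time. For every pair $i \neq j$ we test whether each bead contains a connected color, or whether some disconnected color $c$ in $H_j$ has $f_0^{-1}(c)+V(H_i)$ connected. Each test is a connectivity computation. If a pair qualifies, Algorithm~\ref{alg:minimal_strategy} produces a strategy of $|C|-1$ moves by Theorem~\ref{thm:minimal_strategy}, so we need not build one by hand. Alternatively, we can run Algorithm~\ref{alg:minimal_strategy} directly, which decides this case for arbitrary graphs.

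\emph{Case $|C|$.} If the test fails, it remains to exhibit a strategy of exactly $|C|$ moves.

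\textbf{Subcase (i): two non-neighboring beads share a color.} Suppose $u \in V(H_k)$ and $w \in V(H_{k+2})$ with $f_0(u)=f_0(w)=c$. One move $(v,c)$ from any $v \in V(H_{k+1})$ makes the flooded area of $v$ contain $u,v,w$. This set is a dominating set, because consecutive beads are fully joined.
\begin{itemize}
\item If $c$ already occurs on $v$'s component, nothing is lost.
\item Otherwise the first move is wasted, but $|C|-1$ further moves from pivot $v$, one per color other than the final one, flood everything.
\end{itemize}
The order of these later moves must be chosen with care. Every vertex outside the hub is adjacent to the hub. So flooding each remaining color once from pivot $v$ absorbs all vertices of that color, because after recoloring the hub to color $d$, every $d$-colored vertex adjacent to the hub joins it. The final color needs no move. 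This gives $1+(|C|-1)=|C|$ moves.

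\textbf{Subcase (ii): non-neighboring beads are color-disjoint.} Choose $u \in V(H_0)$, $v \in V(H_1)$, $w \in V(H_2)$ and perform the two moves $(v,f_0(u))$ and $(v,f_0(w))$. The key step is showing that the first move's color $c=f_0(u)$ disappears. Every $c$-colored vertex lies in $H_0$, $H_1$, or $H_4$; it cannot lie in $H_2$ or $H_3$, since those are non-neighbors of $H_0$. Color disjointness shows that $c$ cannot appear in both $H_1$ and $H_4$.
\begin{itemize}
\item If $c$ lies in $H_4$ as well as $H_0$, the $H_4$ vertices are adjacent to $u$, which is in the flood after move 1.
\item Vertices of $c$ in $H_0$ or $H_1$ are adjacent to $v$ or $u$.
\end{itemize}
Hence after the first move all $c$-vertices form the single component of $v$, and the second move eliminates $c$. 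Now $u,v,w$ form a dominating monochromatic hub, and only $|C|-2$ colors remain after $c$ is gone. Flooding the $|C|-2$ colors other than the hub's final color finishes. This needs care when the final color equals $f_0(w)$, but then we simply skip it, for a total of $2+(|C|-2)=|C|$.

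All steps are polynomial: there are $O(25)$ bead-pair tests with connectivity checks, followed by linear-size strategy construction. Recognizing the beads requires finding an induced $C_5$ and the module structure, but we may assume this decomposition is given or compute it by brute force in $O(n^5)$. The main obstacle I expect is making the counting in Subcase (ii) rigorous, in particular confirming that the $c$-vertices really merge into one component after the first move under color disjointness. I would first try a small case analysis over which beads can contain $c$, as sketched above.
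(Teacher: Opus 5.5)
Your proposal is correct and follows essentially the same route as the paper. You decide the $|C|-1$ case via Lemma~\ref{lem:bead_CminusOne_strategy} (or equally via Algorithm~\ref{alg:minimal_strategy}); otherwise you build a dominating hub $u,v,w$ in one move when two non-neighboring beads share a color, or else in two moves from $v\in V(H_1)$ that also eliminate $f_0(u)$, since non-neighboring beads are color-disjoint, and your extra remarks (recognizing the beads by brute-force $C_5$ detection, and the count $2+(|C|-2)$ in the last case) are sound refinements rather than a different approach.
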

Together with Corollary~\ref{cor:edge_case}, the above lemma implies our main result stated in Theorem~\ref{thm:polynomial_time+cases}.

We would like to point out that the results of this section are, in fact, stronger than what is expressed by the concluding lemma.
Indeed, it is not necessary for the graph $G$ to be a bracelet whose beads $H_0,H_1,\dots,H_4$ are cographs.
Since the cograph structure of the beads is never used, our approach applies to any choice of non-empty graphs $H_0,H_1,\dots,H_4$, provided they are arranged as a bracelet over a {\cfive}.


\section{Flood Filling Games are $\NP$-complete on Thin-Spider {\Playboards}}
\label{sec:np_completeness}

This section proves Theorem~\ref{thm:NP-completeness:from_split_graphs}.
Our approach adapts ideas from the methods of Fleischer and Woeginger~\cite{Fleischer2010}, Fukui~et~al.~\cite{Fukui2013}, and Fellows~et~al.~\cite{FELLOWS2015}.
However, the reduction gadgets used in our proof are so-called \emph{thin spiders}.
Thin spiders, as defined by Nastos~and~Gao~\cite{thinspiders}, are connected split graphs admitting a vertex partition into a clique $K$ and an independent set $I$ such that every vertex in $K$ and in $I$ has at most one neighbor in the respective other part.

It is easy to see that these graphs are exactly the {\claw}- and {\dimond}-free split graphs.
Indeed, any induced {\claw} would have its central vertex in $K$, forcing it to be adjacent to at least two vertices in $I$.
Similarly, any induced {\dimond} would place at least one of the two non-adjacent vertices into $I$, giving it at least two neighbors in $K$.
The converse direction is obvious, if $|K| \leq 2$.
Otherwise, let $G$ be a split graph in which some vertex $v$ has at least two neighbors $x,y$ in the opposite part.
Then we find an induced {\claw} or {\dimond} in $G$ as follows.
We may assume that the partition into $K$ and $I$ is chosen such that every vertex of $I$ is non-adjacent to at least one vertex of $K$, as otherwise it could simply be moved from $I$ into $K$.
Firstly, if $v\in I$ and $x,y\in K$, then there exists a vertex $z\in K-N[v]$, and the vertices $v,x,y,z$ induce a {\dimond}.
And, secondly, if $v\in K$ and $x,y\in I$, let $z\in K-N[x]$.
Since $v$ is adjacent to $z$, while $y$ and $z$ are not adjacent (as this would imply $|N[y]\cap K|>1$), the vertices $v,x,y,z$ induce a {\claw} with central vertex $v$.

Notice that thin spiders exhibit very little structure: up to isomorphism, there is exactly one thin spider for each value of $|K|$ and $|I|$. 
In this sense, our result is slightly stronger than the earlier $\NP$-hardness results for split graphs from~\cite{Fleischer2010,Fukui2013}.
On the other hand, these works establish hardness even for instances with a proper initial {\flooding} (where no pair of adjacent vertices has the same color), which strengthens their conclusions.
In our setting, this restriction has to be dropped, as the simplicity of the thin-spider gadgets does not provide sufficient structure to compensate for it.

With these deeper insights into thin spiders, we are now ready to state the main result of this section.
\begin{restatable}{lemma}{ThinSpiderNPComplete}
\label{lem:NP-completeness:thin_spiders}
    {\MinimumFloodIt} and {\MinimumFreeFloodIt} are $\NP$-complete on thin-spider graphs.
\end{restatable}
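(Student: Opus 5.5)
Membership in $\NP$ is immediate for both variants: a strategy never needs more than $|V(G)|$ moves, and it can be verified by simulating it. For hardness, I would reduce from \textsc{Directed Feedback Vertex Set} (DFVS). The plan rests on a simple reading of thin spiders. A thin spider is a clique $K$ in which every vertex carries at most one pendant leaf. Consequently, all clique vertices of a common color form a single color component. A leaf can only be absorbed into its parent's flooded area, or have its color changed by a move on the leaf itself.

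\emph{Construction.} Let $D=(V_D,A_D)$ with $V_D=\{1,\dots,n\}$ be a digraph and $k$ an integer. The colors are $V_D$ plus a fresh color $0$. The clique contains a pivot vertex $p$ of color $0$, which has no leaf. For every arc $(a,b)\in A_D$ it contains $t=n+1$ clique vertices of color $a$, each carrying a pendant leaf of color $b$. I would also add a leafless clique vertex of color $a$ for each $a \in V_D$, so that every color occurs in $K$. The resulting graph is a thin spider.

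\emph{Fixed variant.} Consider a color sequence $c_1,\dots,c_s$ flooded from $p$. The flooded area of $p$ is always $\{p\}$ together with the clique vertices whose colors have already appeared, plus those leaves $\ell$ whose color $f_0(\ell)$ appeared strictly after the color of the parent of $\ell$. Hence the sequence is a strategy if and only if (i) every color of $V_D$ appears, and (ii) for every arc $(a,b)$ some occurrence of $b$ comes after some occurrence of $a$. Equivalently, it is a common supersequence of all the length-two words $ab$ for $(a,b)\in A_D$. The minimum length of such a sequence over the alphabet $V_D$ is exactly $n+\tau(D)$, where $\tau(D)$ is the minimum size of a directed feedback vertex set:
\begin{itemize}
\item Given a feedback vertex set $F$, write every color of $F$ first, then $V_D-F$ in a topological order of $D-F$, then $F$ again.
\item Conversely, the colors occurring twice in any valid sequence form a feedback vertex set. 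For the colors occurring once, precedence of single occurrences is acyclic on every arc among them.
\end{itemize}
Thus $D$ has a feedback vertex set of size $k$ if and only if the instance admits a fixed strategy with at most $n+k$ moves. This settles {\MinimumFloodIt}.

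\emph{Free variant.} The same instance should work; the replication factor $t$ is what makes it so. The free player has additional options:
\begin{itemize}
\item moving on a single leaf, which recolors only that leaf;
\item using a pivot in some color class of $K$ other than that of $p$.
\end{itemize}
For the first option, each arc contributes $t>n\ge k$ leaves with the same parent color. Any strategy within $n+k<2n+1$ moves therefore cannot spend a private move on every leaf of any arc. So for every arc at least one of its leaves is absorbed through a clique move, which forces the same ordering constraint ``some $b$ after some $a$''. For the second option, I would argue by an exchange argument that the color classes of $K$ merge like a single growing component. Any move inside $K$ merges the pivot's class with the class of the flooded color. The set of clique colors already \emph{merged} behaves as in the fixed game, except that several disjoint merged groups may exist in parallel. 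I would show that the number of moves is at least $(\#\text{colors}-1)$ plus the number of repeated colors. I would then map the sequence of merging moves to a fixed color sequence from $p$ of no greater length, still satisfying (ii). The free lower bound then equals the fixed one, and the upper bound is inherited from the fixed strategy.

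The main obstacle is this last step: proving that parallel, independently grown clique groups together with occasional leaf moves cannot beat $n+\tau(D)$. If the direct exchange argument gets messy, I would first try strengthening the gadget. One option is to make every color class of $K$ large and attach to $p$ a structure forcing all useful moves through one component. Another is to charge each move to the elimination of at most one color in the style of Section~\ref{sec:Polytime_Edge_Case}, showing that every color surviving past its first elimination opportunity corresponds to a vertex of a feedback set.
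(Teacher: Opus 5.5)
Your fixed-variant argument is correct. Reducing from Directed Feedback Vertex Set is a legitimate generalization of the paper's route. The paper's gadget is essentially your construction for the symmetric digraph of a Vertex Cover instance with $t=1$. The difference is that, instead of leafless vertices, its $0$-colored clique part $\dot V$ carries one leaf $\ddot v$ of every color $v$. You should also require $D$ to be loopless, since a loop $(a,a)$ yields leaves that are absorbed for free.

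The genuine gap is the free variant. This is where the real work of the lemma lies, and you only announce it. Worse, the one concrete inequality you propose is false there. Take the arcs $(1,3),(2,3)$. Flooding $3$ from the color-$1$ clique class, then from the color-$2$ class, then from $p$, makes your graph monochromatic in $3$ moves, whereas $(\#\text{colors}-1)+\#\text{repeated colors}=3+1$. In the free game a color can be flooded arbitrarily often without loss, as long as each such move fuses two clique groups. So repetitions are not what must be charged, and your map to a fixed sequence has nothing to stand on. Your leaf-budget step also needs that $n$ further moves are unavoidable besides the $t$ private ones, which requires exactly this missing accounting.

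The missing idea is to count merges. Call the maximal monochromatic sets of clique vertices clique groups; they always have pairwise distinct colors. Hence a move fuses at most two of them, and a private leaf move fuses none. So exactly $n$ moves are merging and all others are waste. This also settles the leaf budget: if some arc has all its leaves moved privately, there are at least $n+t>n+\tau(D)$ moves.

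Now let $R$ be the set of colors flooded by some non-merging clique move. Let $\rho(c)$ be the first move recoloring the group that contains the initial clique class of color $c$. For $c\notin R$, every clique move flooding $c$ precedes $\rho(c)$: after $\rho(c)$, the first such move would find no group colored $c$ and hence be non-merging. An arc $(a,b)$ with a never-moved leaf forces a move $T$ recoloring the group of the color-$a$ class to $b$. Hence $\rho(a)\le T<\rho(b)$ whenever $b\notin R$. A cycle avoiding $R$ would then give $\rho(a_1)<\rho(a_2)<\dots<\rho(a_1)$. Thus $R$ is a feedback vertex set, and every free strategy has at least $n+|R|\ge n+\tau(D)$ moves.

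The paper instead engineers its gadget so that the leaves $\ddot v$ force, for every color $v$, a move flooding $v$ from the area of $p$. After normalizing all pivots into $K$, it classifies moves as hard and soft $v$-moves relative to $p$ and shows that the colors with a single $v$-move are independent in $H$. Your gadget lacks this forcing, since colors may merge far from $p$ as in the example above. So that counting does not transfer, and an argument such as the one above is needed.
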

The proof of the lemma essentially works by a standard reduction from \textsc{Minimum Vertex Cover} to both flooding games.
The basic idea follows the $\NP$-hardness proof for trees by Fellows~et~al.~\cite{FELLOWS2015}.
The vertices of a given graph $H$ are used as colors in the initial {\flooding} of a gadget graph~$G$; in our case, a thin spider, and in~\cite{FELLOWS2015}, a tree of diameter four.
Both gadget types share the principle that every edge $vw$ of $H$ is represented by two leaves, one corresponding to each endpoint~$v$ and~$w$.

Then $H$ has a vertex cover $W$, if and only if all vertices of~$G$ can be conquered from a designated, fixed pivot $p$ using exactly $|V(H)| + |W|$ moves.
These moves are divided into three stages: the first runs through the colors of~$W$, the second through $V(H) - W$, and the third through~$W$ again.
The first stage provides a head start for every vertex $w \in W$, allowing it to conquer the parent vertex $w$ for all edges $vw$ of~$H$.
This rapid expansion is exploited in the second stage: while conquering the parents of all remaining leaves, every leaf representing the $v$-endpoint of an edge $vw \in E(H)$ with $w \in W$ is flooded as well.
At this point, the graph $G$ is monochromatic except for the leaves representing the $w$-endpoints of edges of~$H$ with $w \in W$, which are conquered in the third stage.

An additional difficulty arises for {\FreeFloodIt}, where every \emph{free} pivot has to be effectively \emph{forced} to behave like the fixed pivot $p$.
To this end, we adapt an idea of Fukui~et~al.~\cite{Fukui2013} and attach additional leaves to~$G$, one for each color in~$V(H)$, whose parents lie in a designated part of~$K$ associated with the intended fixed pivot $p$.
This construction restricts the available flooding moves and thereby allows them to be interpreted as if they were performed from~$p$.

We conclude this section by deriving the announced consequences of Lemma~\ref{lem:NP-completeness:thin_spiders}.
Since thin spiders are exactly the $(\claw,\dimond)$-free split graphs, and since split graphs are the $(\twoKtwo,\cfour,\cfive)$-free graphs~\cite{foldes}, we obtain the following corollary.
\begin{corollary}\label{cor:NP-completeness:thin_spiders}
    {\MinimumFloodIt} and {\MinimumFreeFloodIt} are $\NP$-complete on $(\claw,\dimond,\twoKtwo,\cfour,\cfive)$-free graphs.
\end{corollary}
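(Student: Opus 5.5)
The corollary follows by combining Lemma~\ref{lem:NP-completeness:thin_spiders} with the two characterizations just recalled. By the argument given above, a graph is a thin spider exactly when it is a connected split graph that is {\claw}- and {\dimond}-free. By Földes and Hammer~\cite{foldes}, a graph is split exactly when it is $(\twoKtwo,\cfour,\cfive)$-free. Hence every thin spider belongs to the class of $(\claw,\dimond,\twoKtwo,\cfour,\cfive)$-free graphs.

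Hardness on a subclass transfers to any superclass. The reductions behind Lemma~\ref{lem:NP-completeness:thin_spiders} output thin spiders, so they also output instances of the larger class, and both {\MinimumFloodIt} and {\MinimumFreeFloodIt} are $\NP$-hard there. Membership in $\NP$ holds for arbitrary graphs. Every instance has a strategy of at most $|V(G)|$ moves, since each useful move reduces the number of color components. A proposed move order of length at most $k$ can then be simulated and checked in polynomial time. So both problems are $\NP$-complete on the class.

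The only point needing care is that the thin-spider characterization really places every thin spider inside the forbidden-subgraph class. This is the direction that membership uses: a thin spider is split, so it has no induced $\twoKtwo$, $\cfour$ or $\cfive$. It also has no induced {\claw} or {\dimond}, since an induced {\claw} would force its center in $K$ to have two neighbors in $I$, and an induced {\dimond} would force a vertex of $I$ to have two neighbors in $K$. Both of these are excluded by the thin-spider degree condition. The converse direction, which needs connectivity, is not required here, so I expect no real obstacle.
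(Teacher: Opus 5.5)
Your proposal is correct and is essentially the paper's argument: thin spiders are split, hence $(\twoKtwo,\cfour,\cfive)$-free by F\"{o}ldes and Hammer, and the thin-spider degree condition excludes induced {\claw} and {\dimond}, so the hardness of Lemma~\ref{lem:NP-completeness:thin_spiders} transfers to the larger class, while membership in $\NP$ holds on general graphs. One small nitpick: the class also contains disconnected graphs, where a fixed-pivot instance may have no strategy at all, so for the certificate what you actually need is that a shortest strategy, whenever one exists, has at most $|V(G)|$ moves.
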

We observe that {\pfive} and {\cobanner} contain an induced {\twoKtwo}, the {\gemlets} {\house} and {\banner} contain an induced {\cfour}, {\chair} contains an induced {\claw}, and both {\kite} and {\gem} contain an induced {\dimond}.
Hence, the corollary above implies Theorem~\ref{thm:NP-completeness:from_split_graphs}.


\section{Conclusion}
\label{sec:conclusion}

To date, the complexity of {\FloodIt} and {\FreeFloodIt} has been determined for $896$ of the $1024$ cograph generalizations.
As a central contribution, this paper added $192$ classes to this state of the art.
The $128$ unresolved classes consist of (i) $(\pfive, \bull)$-free graphs containing a {\gem} or (ii) $(\bull,\chair)$-free graphs.
Based on a preliminary analysis of structural restrictions imposed by these forbidden subgraphs, and the apparent absence of typical hardness conditions within these classes, we arrive at the following conjecture.
\begin{conjecture}
	{\MinimumFloodIt} and {\MinimumFreeFloodIt} are solvable in polynomial time on all $128$ remaining classes.
\end{conjecture}

We note that our results are, in fact, slightly stronger than suggested by the classification in terms of forbidden \gemlets.
While $(\pfive,\bull,\gem)$-free graphs exhibit the structure derived in Section~\ref{sec:graph_structure}, the converse does not hold.
Since our algorithms effectively operate on either a dominating edge or a {\cfive}-bracelet on arbitrary graphs, the resulting tractability extends beyond $(\pfive,\bull,\gem)$-free graphs.
This observation suggests that flooding tractability is governed by the existence and\,/\,or fine structural properties of dominating subgraphs that serve as flooding hubs.
Identifying and exploiting such structures may therefore provide promising directions for extending tractability to larger graph classes and for achieving a deeper understanding of flooding games in general.


\section{Technical Proofs}
\label{sec:proofs}

This section collects the proofs of all the theorems and lemma from the main text, presented in the same order as they appear. Each result is restated for clarity before its proof.

\CFiveCographInjection*
\begin{proof}
    Consider any induced {\cfive} in $G$ and let $v_0$, $v_1$, $v_2$, $v_3$, $v_4$ be the corresponding vertices.
    We define $N_i = N(v_{i-1}) \cap N(v_{i+1})$ for all $i \in \{0, \dots, 4\}$ where, like in the remainder of the proof, indices outside $\{0, 1, 2, 3, 4\}$ are considered modulo five.
    To prove the lemma, we are required to show that $H_0 = G[N_0], \dots, H_4 = G[N_4]$ are cographs whose vertex sets $N_0, \dots, N_4$ form a partition of $V(G)$ and that edges going between them follow the {\cfive}, that is, $G$ has an edge $xy$ between $x \in N_i$ and $y \in N_j$ if and only if $0 \leq i < j < 5$ with $j-i \in \{1,4\}$.

    \medskip
    We begin with showing that the $N$-sets form a partition and, to that end, prove first that $N_0, \dots, N_4$ are mutually disjoint.
    Any setting where a vertex $x$ is in $N_i \cap N_j$ with $0 \leq i < j < 5$ would mean by definition that $x \in N(v_{i-1}) \cap N(v_{i+1})$ and $x \in N(v_{j-1}) \cap N(v_{j+1})$.
    If $j-i \equiv 1 \mod 2$ then $x$ would be the {\tip} of an induced {\gem} with {\pfour} on $v_{i-1},v_i,v_{i+1},v_{i+2}$ or on $v_{i-2},v_{i-1},v_i,v_{i+1}$.
    Otherwise, if $j-i \equiv 0 \mod 2$, there is an induced {\bull} with {\tip} $x$ and {\pfour} on $v_i,v_{i-1},v_{i-2},v_{i+2}$ or on $v_i,v_{i+2},v_{i+2},v_{i-2}$.
    
    Before we go on in proving that the $N$-sets partition $V(G)$, we insert an observation that best fits at this point.
    In fact, for every $i \in \{0,\dots,4\}$, we can say for all $x \in N_i$ that, while being adjacent to $v_{i-1}$ and $v_{i+1}$ by definition, $x$ is not adjacent to $v_{i-2}$ or $v_{i+2}$.
    That happens, because, if $x$ was adjacent to $v_{i-2}$ then $x \in N(v_{i-2}) \cap N(v_{i+1}) = N_{i+2}$ and, thus, $x \in N_i \cap N_{i+2}$, which is a contradiction.
    Similarly, $x \in N(v_{i+2}) \cap N(v_{i-1}) = N_{i-2}$ leads to the contradiction $x \in N_i \cap N_{i-2}$.
    
    Now, because the $N$-sets are already known to be mutually disjoint, it is sufficient to prove that all vertices are covered by them to establish them as a partition of $V(G)$.
    For a contradiction, we, hence, assume a vertex $v \in V(G) - (N_0 + N_1 + N_2 + N_3 + N_4)$ exists.
    By the connectedness of $G$, there exists a path from $v$ to $w$, where $w$ is the only vertex on the path that is contained in $N_0 + N_1 + N_2 + N_3 + N_4$.
    Let $x$ be the unique neighbor of $w$ on this path (possibly with $x = v$).
    From the selection of $x$, we know that $x \not\in \{v_0,\dots,v_4\}$.
    Clearly, $x$ cannot be adjacent to two or more vertices of $v_0,\dots,v_4$.
    In fact, if $x$ was (among others) adjacent to $v_i$ and $v_j$ with $0 \leq i < j < 5$ and $j-i \in \{2,3\}$ then $x \in N_{i+1}$, which is a contradiction to the choice of $x$.
    Otherwise, if $x$ saw exactly $v_i$ and $v_j$ on the {\cfive} with $i \in \{0, \dots, 4\}$ and $j = i+1$ then there would be an induced {\bull} with {\pfour} on $v_{i-1}, v_i, v_{i+1}, v_{i+2}$ and {\tip} $x$.
    Next, we see that $x$ cannot be adjacent to exactly one vertex $v_i$ of the {\cfive}, either.
    This would induce a {\pfive} on $x,v_i,v_{i+1},v_{i+2},v_{i-2}$.
    In the last case, $x$ does not see any vertex $v_0,\dots,v_4$.
    However, $x$ is adjacent to $w$, which is in $N_i$ for some $i \in \{0, \dots, 4\}$.
    Recall that this means $v_{i-1}w, v_{i+1}w \in E(G)$, while the edges $v_{i-2}w$ and $v_{i+2}w$ are not present.
    This would mean that there is an induced {\pfive} on $x,w,v_{i+1},v_{i+2},v_{i-2}$, a contradiction.
    Hence, $v$ does not exist and the $N$-sets partition $V(G)$.

    \medskip
    The next step is to show for all $x \in N_i$ and $y \in N_j$ with $0 \leq i < j < 5$ that $xy \in E(G)$ if and only if with $j-i \in \{1,4\}$.
    So, firstly, let $j-i \in \{1,4\}$, hence, with $v_i$ and $v_j$ neighbors on the {\cfive}.
    We show that $xy \in E(G)$.
    If $x = v_i$ then, from $y \in N_j = N(v_i) \cap N(v_{i+2})$, we get $y \in N(x)$.
    Symmetrically, $y = v_j$ implies $x \in N(y)$.
    So, consider $x \in N_i - v_i$ and $y \in N_j - v_j$ and assume that $xy \not\in E(G)$.
    By definition and the above observation, we know that $x$ sees $v_{i-1}$ and $v_j$ but neither $v_{i-1}$ nor $v_{j+1}$, while $y \in N(v_i) \cap N(v_{j+1})$ cannot see $v_{i-1}$ or $v_{j+2}$.
    That $xv_i, yv_j \not\in E(G)$ is implied by the otherwise induced {\bull} with {\pfour} on $v_{i-2}, v_{i-1}, v_i, y$ and {\tip} $x$ or with {\pfour} on $x, v_j, v_{j+1}, v_{j+2}$ and {\tip} $y$.
    But this already contradicts the assumption because, now, there is an induced {\pfive} on $v_{i-1},x,v_j,v_{j+1},y$.

    Secondly, we assume, for a contradiction, that $xy \in E(G)$ where $x \in N_i$ and $y \in N_j$ with $j-i \in \{2,3\}$.
    By symmetry, we can say, without loss of generality, that $j = i+2$.
    If $x = v_i$ then, since $y = v_j$ is impossible here, $y \in N(v_i) \cap N(v_{j-1}) \cap N(v_{j+1})$.
    Thus, $y$ would be in two $N$-sets, $N_j$ and $N_{i-1}$, which has been excluded, above.
    In the same way, we find $y \not= v_j$.
    So again, consider $x \in N_i - v_i$ and $y \in N_j - v_j$ and, hence, $x \in N(v_{i-1}) \cap N(v_{i+1})$ cannot see $v_{i-2}$ or $v_{i+2}$, while $y \in N(v_{j-1}) \cap N(v_{j+1})$ cannot see $v_{j-2}$ or $v_{j+2}$.
    If the edge $xv_i$ existed in $G$ then there would be an induced {\gem} with {\pfour} on $v_{i-1}, v_i, v_{i+2}, y$ and {\tip} $x$.
    In the absence of $xv_i$, however, there is an induced {\bull} with {\pfour} on $v_i,v_{i+1},y,v_{i-2}$ and {\tip} $x$.
    So this assumption fails, too, and we have no edges between $N_i$ and $N_j$.

    \medskip
    The last step, establishes $H_0, \dots, H_4$ as cographs, that is, {\pfour}-free graphs.
    This is obvious, again, since any {\pfour} in $H_i$ would induce a {\gem} with {\tip} $v_{i+1}$.
    This completes the proof.
\end{proof}

\EdgeCase*
\begin{proof}
	If $G$ is a cograph then, as it is connected, there are non-empty cographs $G_1$ and $G_2$ such that $G$ is the full join of $G_1$ with $G_2$.
	Every edge $vw$ of $G$ with $v \in V(G_1)$ and $w \in V(G_2)$ fulfills $N[v] \supseteq V(G_1)$ and $N[w] \supseteq V(G_2)$ and, so, $N[v] + N[w] = V(G)$.

	\medskip
    Otherwise, $G$ contains at least one {\pfour}.
	Before we can prove the lemma in this case, we argue that $G$ contains a dominating {\pfour}, that is, one that has every vertex of $G$ as a neighbor to at least one of its own vertices.
	To see this, let us fix any {\pfour}, say $\phi$, in $G$, say on vertices $a,b,c,d$.
    Then, we only have to consider the case where the vertex set $X = V(G) - N[\phi]$ is not empty.
    
    We first show that every $x \in X$ is already close to $\phi$, that is, $\min\{\dist_G(x,v) \mid v \in V(\phi)\} = 2$, where $\dist_G(x, v)$ is the length of a shortest path between vertices $x$ and $v$ in $G$.
	By definition of $X$ we already have $\min\{\dist_G(x,v) \mid v \in V(\phi)\} \geq 2$, since $x \not\in N[\phi]$.
	To see that this is also an upper bound let us fix any shortest path between $x$ and a vertex $u \in V(\phi)$, which exists due to the connectedness of $G$.
	Note that this path contains exactly one vertex from $\phi$ as otherwise there is a choice for $u$ that results in a shorter path than the one we fixed.
    Let $y$ be the vertex on this path that is second to last, that is, the predecessor of $u$.
    By $x \not\in N[\phi]$ we have $y \not= x$ and since $u$ is the only vertex on this path from $\phi$ we also have $y \not\in V(\phi)$.
    Furthermore, $y$ is not adjacent to all of $a,b,c,d$ as this would induce a {\gem} in $G$.
    So, there is an edge $vw \in E(\phi)$ with $v$ adjacent to $y$ (possibly $u=v$) and $w$ not adjacent to $y$.
    If the distance between $\dist_G(x,y) \geq 2$ then there is an induced path between $x$ and $w$ of length at least five.
    This is forbidden and, hence, $x \in N(y)$.

    Next, we show for every vertex $x \in X$ that all $y \in N(x) \cap N(\phi)$ are either adjacent to $a$ and $c$ but none of $b$ and $d$, or the other way around, that is, to $b$ and $d$ but not to $a$ or $c$.
    That $y$ is adjacent to all four vertices means that there is an induced {\gem}, impossible.
    But, if only one vertex of $\phi$ was adjacent to $y$, without loss of generality either $a$ or $b$, then this would induce a {\pfive}, either on $x,y,a,b,c$ or on $x,y,b,c,d$.
    With exactly three neighbors of $y$ in $V(\phi)$, again without loss of generality, say $a,b,c$ or $a,b,d$, we get an induced {\bull}, either with {\pfour} on $x,y,c,d$ and {\tip} $b$ or on $x,y,b,c$ and {\tip} $a$.
    This leaves us with exactly two $\phi$-vertices in $N(y)$, either $a,b$ or $a,c$ or $a,d$ or $b,c$ or $b,d$ or $c,d$, where the second and the second to last pair are the allowed ones.
    The pairs $a,b$ and $a,d$ and $c,d$ are impossible because they induce a {\pfive}, either on $x,y,b,c,d$ or on $x,y,a,b,c$ or on $x,y,c,b,a$.
    Finally, $b,c$ induces a {\bull} with $\phi$ as its {\pfour} and {\tip} $y$.
    Hence, either $N(y) \cap V(\phi) = \{a,c\}$ or $\{b,d\}$.

    The previous arguments allow to map every vertex $x \in X$ to a vertex $y(x) \in N(x) \cap N(\phi)$ that is either adjacent to $a,c$ but not $b,d$, or to $b,d$ but not $a,c$.
    Here, we call $y(x)$ the \emph{guard} of $x$.
    Subsequently, we show that this mapping can be rather constant.
    In fact, if there was a vertex $y \in N(a) \cap N(c)$ or $y \in N(b) \cap N(d)$ such that $X \subseteq N(y)$ then we can define the unique guard $y(x) = y$ for all $x \in X$.
    Otherwise, we provide a single edge $y_1y_2 \in E(G)$ with $y_1 \in N(a) \cap N(c)$ and $y_2 \in N(b) \cap N(d)$ such that $X \subseteq N(y_1) + N(y_2)$.
    For this purpose, let $X_1 = \{x \in X \mid y(x) \in N(a) \cap N(c)\}$ consist of all vertices with guard adjacent to $a$ and $c$ and, analogously, let $X_2 = \{x \in X \mid y(x) \in N(b) \cap N(d)\}$.
    Moreover, for every $x \in X$, we let $F_1(x) = N(y(x)) \cap X_1$ and $F_2(x) = N(y(x)) \cap X_2$ be the \emph{fellows} of $x$ within $X_1$ and $X_2$, respectively, that is, the vertices that allow for the same guard as $x$.
    
	Then, we argue that the vertices within $X_1$ and, respectively, $X_2$ can be ordered according to fellowships.
    In fact, for all $x, x' \in X_1$ it is true that $F_1(x) \subseteq F_1(x')$ or $F_1(x) \supseteq F_1(x')$ and, analogously, the same works for $X_2$.
    More precisely, if there existed $z \in F_1(x) - F_1(x')$ and $z' \in F_1(x') - F_1(x)$ then, with $y = y(x)$ and $y' = y(x')$, they would induce a {\pfive}, either on $z,y,a,y',z'$, if $zz', yy' \not\in E(G)$, or on $z',z,y,c,d$, if $zz' \in E(G)$, or they would induce a {\bull} with {\pfour} on $z,y,y',z'$ and {\tip} $a$, if $zz' \not\in E(G)$ and $yy' \in E(G)$.
    A similar argument works for $X_2$ and $F_2$.
    Consequently, there have to be maximal elements $z_1 \in X_1$ and $z_2 \in X_2$ such that $F_1(z_1) \supseteq F_1(x)$ for all $x \in X_1$ and $F_2(z_2) \supseteq F_1(x)$ for all $x \in X_2$, respectively.
    Clearly, $F_1(z_1) = X_1$ as, otherwise, there would be a vertex $x \in X_1$ with a guard $y(x)$ that is not adjacent to $z_1$, thus, with $F_1(z_1) \not\supseteq F_1(x)$.
    In the same way, $F_2(z_2) = X_2$.
    Hence, the guard vertices $y_1 = y(z_1)$ and $y_2 = y(z_2)$ can operate as guard for the entire sets $X_1$ and $X_2$, respectively.
    Before we continue, we select $x_1 \in X_1 - N(y_2)$ and $x_2 \in X_2 - N(y_1)$.
    Such vertices exist for the following reason.
	If all $x \in X_1$ are adjacent to $y_2$ then $X \subseteq N(y_2)$.
	Similarly, $X_2 \subseteq N(y_1)$ implies $X \subseteq N(y_1)$.
	Either way, we are in the case where a single vertex $y$ exists with $X \subseteq N(y)$.
    With these vertices $x_1$ and $x_2$ it is easy to see that $y_1y_2 \in E(G)$, as otherwise, there would be a {\pfive} on $x_1,y_1,c,b,y_2$.

    We are finally ready to select a new {\pfour}, we will call it $P$, with $V(G) = N[P]$.
    Firstly, if there is a single guard $y$ with $X \subseteq N(y)$, without loss of generality, say $a, c \in N(y)$ but neither $b$ nor $d$, then we choose $P$ on the vertices $a, y, c, d$.
    Clearly, $P$ is an induced {\pfour} in $G$.
    We show that every vertex $z \in V(G)-V(P)$ is in $N(P)$.
    Since $X \subseteq N(y)$ and $ab \in E(G)$, we do not have to consider $z \in X$ or $z = b$.
    By construction, we, thus, know that $z \in N(\phi) - V(\phi)$, a vertex outside but adjacent to $\phi$.
    If any edge of $za$, $zc$, $zd$ and $zy$ is present then $z$ is $N(P)$.
    In the final case, where $z$ is just adjacent to $b$, any edge $xz$ with $x \in X$ would imply $zd \in E(G)$ and, thus, $z \in N(P)$.
    But $X \cap N(z) = \emptyset$ is impossible because there is an induced {\pfive} on $z,b,c,y,x$.

    Secondly, if there is an edge $y_1y_2$ of two guards, $y_1 \in N(a) \cap N(c)$ and $y_2 \in N(b) \cap N(d)$, then we select $P$ on the vertices $a,y_1,y_2,d$, which obviously induces a {\pfour}.
    Again, to show $N[P] = V(G)$, consider $z \in V(G) - V(P)$.
    If $z \in X+\{b,c\}$ then $z \in N(y_1) + N(y_2)$ and if one of the edges $za$, $zd$, $zy_1$, $zy_2$ exists then $z \in N(P)$.
    Otherwise, $z$ would be a vertex outside $\phi$ but adjacent to $b$ or $c$ or both.
    In any case, $z$ could not be adjacent to some $x \in X$ because, like before, this would put $z$ into $X_1$ or $X_2$.
    If $z$ was adjacent to $b$ and\,/\,or $c$ then this would induce a {\pfive}, either on $z,b,a,y_1,x_1$ or on $z,c,d,y_2,x_2$ with $x_1 \in X_1-N(y_2)$ and $x_2 \in X_2-N(y_1)$, as selected above.

	\medskip
	Now that we have with $P$ a dominating {\pfour}, let the vertices of $P$ be denoted as $u,v,w,x$.
	We show that the middle edge $vw$ of $P$ alone fulfills $V(G) = N[v] + N[w]$, already.
	In fact, assume a vertex $y \in V(G) - N[v] - N[w]$.
	Since $P$ is dominating, at least one of the edges $uy$ and $xy$ must be present.
	However, if one of them was missing then there would be an induced {\pfive}, either on $y,u,v,w,x$ or on $u,v,w,x,y$.
	But both edges present mean an induced {\cfive} on $u,v,w,x,y$, which is also impossible.
	Hence, $y$ cannot exist, which completes the proof.
\end{proof}

\ThmMinimalStrategy*
\begin{proof}
	We first show that Algorithm~\ref{alg:minimal_strategy} always terminates.
	If all colors are connected in $f_0$, this is obviously the case.
	The algorithm also clearly terminates if $\Delta = \emptyset$ in any iteration of the while-loop.
	For all other cases, notice that $\Gamma$ is always extended by all elements in $\Delta \subseteq C$ and $\Delta \cap \Gamma = \emptyset$.
	In other words, each iteration of the while-loop (that does not return $\bot$) assigns to $\Gamma$ a larger subset of $C$.
	Since $C$ is finite, we must have $\Gamma = C$ after a finite number of iterations, exiting the while-loop.
	Thus, the algorithm terminates for all inputs.

	\medskip
	We now show that Algorithm~\ref{alg:minimal_strategy} returns a strategy if a minimal free strategy exists, so assume there is a free strategy of $|C| - 1$ moves.
	For the sake of contradiction, assume also that Algorithm~\ref{alg:minimal_strategy} returns $\bot$ when given $G$ and $f_0$ as input.
	Then $f_0$ contains at least one color that is not connected, since otherwise the algorithm would skip the while-loop and return a strategy.
	Furthermore, there was some iteration of the while-loop where the set $\{ c \in C \mid \Gamma \text{ connects } c\}$ was empty.
	Let $I$ be the free move order computed by our algorithm up to this point, let $S$ be an optimal free strategy, and let $c$ be the first color that $S$ connects, but $I$ does not.

	Note that $c$ exists, because $S$ must connect all colors.
	Were that not the case for some color $c'$, we could neither eliminate $c'$ nor could the strategy cause the entire graph to become $c'$-colored.
	The only way to connect two (or more) disconnected $c$-colored regions while also eliminating a color during each move is by flooding connected colors with $c$.
	Flooding disconnected colors does not eliminate a color, and flooding with any color other than $c$ can not create monochromatic $c$-colored paths between previously disconnected $c$-colored vertices.
	Thus, $S$ must contain moves $(v, c)$, where $v$ is the vertex of the single color component of a connected color.
	Let $R$ be the set of these color components.
	We define $W := \bigcup_{A \in R} \{f_0(v) \mid v \in A\}$.
	Clearly, $c \in \{c' \in C \mid W \text{ connects } c'\}$ and $S$ connects all colors in $W$ before $c$.

	Since $c$ is the first color that was not connected by $I$ it follows that $I$ connects all colors in $W$.
	Whenever a color is connected, it is added to $\Gamma$ (see line 13).
	But then $c \in \Delta$, since $W \subseteq \Gamma$ connects $c$.
	A contradiction.

	\medskip
	It remains to show that the returned move order is also a free strategy of length $|C| - 1$.
	To that end, let $S$ now be the move order returned by Algorithm~\ref{alg:minimal_strategy}.
	It follows from our arguments above, that $S$ must connect all colors eventually.
	Lines~17~and~18 in Algorithm~\ref{alg:minimal_strategy} guarantee that all remaining color components are flooded with the same color.
	The result of this is obviously a monochromatic flooding.
	In other words, $S$ is a strategy.

	To see that $S$ also only contains $|C| - 1$ moves, observe that any move added to $S$ is specifically crafted to only flood the single color component of a connected color.
	As long as no such move floods a $c$-colored color component with color $c$ by accident, we are done.
	But this is the case:
	$P$ contains no region of color $c$ by definition in any iteration $c$ for the loop in line~10, and line~18 specifically leaves out any $c$-colored color component when finalizing the graph with color $c$.
	As this implies that every move eliminates a color, we have exactly $|C| - 1$ moves in $S$.

	\medskip
	Finally, we observe that Algorithm~\ref{alg:minimal_strategy} has polynomial runtime.
	Using any standard disjoint-set data structure, we can keep track of each monochrome region by first representing each vertex through a singleton set and, unionizing at the beginning all sets of equal color.
	From this, we can extract the set of connected colors in polynomial time.
	By our arguments for termination of the algorithm, we can see that the while-loop has at most $|C|$ iterations, each of which is in polynomial time by using union-find and BFS.
\end{proof}

\OptimalBeadStrategy*
\begin{proof}
	Assume first that $G$ has a free strategy $S = (x_1, c_1), \dots, (x_s, c_s)$ of $s = |C|-1$ moves.
	Like in the previous section, every move has to eliminate one color so that, at the end, $G$ becomes monochromatic.
	So, with respect to $f_0$, at least $c_1$ has to be a connected color.
	We let $i$ be the index of the bead $H_i$ that contains $x_1$.

	If there is any connected color $c$ (possibly with $c=c_1$) where $f^{-1}_0(c) \cap V(H_j)$ is not empty for any $j \not= i$ then the statement of the lemma holds, already.
	So, we are left with the case where all connected colors are confined within $H_i$ and all vertices of other beads are part of disconnected colors.

	So, assume all colors $c$ leave $f^{-1}_0(c) + V(H_i)$ disconnected.
	Then consider any vertex $v \in V(H_a)$ with $a = i+1$ and let $c = f_0(v)$ be the corresponding disconnected color.
	By our assumption, $f^{-1}_0(c) + V(H_i)$ is not connected and, since neighboring beads are fully joined, it must be true for all $k \not= i$ that $f^{-1}_0(c) \cap V(H_k)$ is non-empty if and only if $k=a$ or $k=a+2$.
	With the same arguments, we obtain for any vertex $w \in V(H_{b})$ with $b=i-1$ with initial color $c' = f_0(w)$ that, for all $k \not=i$, $f^{-1}_0(c') \cap V(H_k)$ is non-empty if and only if $k=b$ or $k=b-2$.
	Notice that one of the two conditions is true for all disconnected colors that do not appear in $H_i$ only.
	
	Recall, that every move $m$ must eliminate a color and that this is only possible with a pivot that is part of the single color component $f^{-1}_{m-1}(z)$ of a connected color $z$.
	Moreover, any disconnected color $c$ can only become connected via a move $(x_m, c_m)$ with $c_m = c$ and pivot $x_m$ making $f^{-1}_{m-1}(x_m) + f^{-1}_{m-1}(c)$ connected.
	As $f_0$ defines connected colors only within $H_i$, there is a prefix $(x_1, c_1), \dots, (x_p, c_p)$ that uses pivots $x_1, \dots, x_p$ from $H_i$ and where move $p$ is the first that conquers a vertex outside $V(H_i)$.
	The move $p$ floods with a color $c_p=f^{-1}_{p-1}(v)$ of a vertex $v \not\in V(H_i)$ that is adjacent to $f^{-1}_{p-1}(x_p)$, thus, with $v \in V(H_{i-1}) + V(H_{i+1})$.
	Without loss of generality, assume that $v$ belongs to $H_{i+1}$.
	Using the above arguments, we find at least one vertex $w \in f^{-1}_{p-1}(c_p) \cap V(H_{i+3})$.
	Due to the {\cfive}-form of $G$ and by the definition of $p$, the vertex $w$ is not adjacent to $f^{-1}_{p-1}(x_p)$ and, thus, $f^{-1}_{p}(c_p)$ remains disconnected.
	But the move $p$ has already used the color $c_p$ and, hence, there is no way for the remaining moves of $S$ to connect this color and still flood with every color but one.

	\medskip
	Now, for the converse direction, assume that $H_i$ and $H_j$ fulfill the condition of the lemma.
 	We construct a free strategy for $G$ of $|C|-1$ moves that, that in the first up to two moves conquers a dominating three-vertex-path $D$ while still eliminating one color per move, and that uses $D$ in the remaining moves to conquer the rest of the colors except one.
	So, in the following we just demonstrate how $D = u,v,w$ can be constructed in every case, and we argue each time that all corresponding moves eliminate a color.

	Let $c$ be a connected color such that $f^{-1}_0(c) \cap V(H_i)$ is not empty.
	Furthermore, let the connected or disconnected color in $H_j$ be $c'$ (possibly with $c=c'$).
	Then, in the first case, $H_i$ and $H_j$ are adjacent, say with $i + 1 = j$.
	We select arbitrary vertices $u \in V(H_i) \cap f^{-1}_0(c)$, $v \in V(H_{i+1}) \cap f^{-1}_0(c')$, and $w \in V(H_{i+2})$.
	If $c = c'$ (so $c'$ is also connected) then $D = u,v,w$ is conquered with the single move $(u, f_0(w))$, which eliminates $c$.
	Otherwise, the two moves $(u, c'), (u, f_0(w))$ conquer $D$ and eliminate $c$ and $c'$.
	This happens even if $c'$ is a disconnected color, since $f^{-1}_0(c') + V(H_i)$ is connected, which, in bead-bracelets, implies the same for $f^{-1}_0(c') + f^{-1}_0(u)$.

	In the second case, $H_i$ and $H_j$ are non-neighboring, without loss of generality, say with $i + 2 = j$.
	Again, we select $u \in V(H_i) \cap f^{-1}_0(c)$ and $w \in V(H_{i+2}) \cap f^{-1}_0(c')$ and see that $c = c'$ this time means that there is a $v \in V(H_{i+1}) \cap f^{-1}_0(c)$ that completes a monochromatic $D$ without any moves.
	Otherwise, and if $c'$ is connected, we select $v$ arbitrarily in $V(H_{i+1})$ and make the two moves $(u, f_0(v)), (w, f_0(v))$ to conquer $D$, which eliminate $c$ and $c'$.
	Finally, having $c'$ disconnected but under the condition stated in the lemma, requires the existence of a vertex $v \in V(H_{i+1}) \cap f^{-1}_0(c)$, since $G$ is a $5$-bead bracelet and $f^{-1}_0(c') + V(H_i)$ must be connected.
	Then the move $(u, c')$ establishes $D$ and eliminates $c$.
	This completes the proof.
\end{proof}

\BeadAlgorithm*
\begin{proof}
	An algorithm can firstly determine if a $|C|-1$-move strategy exists and, if so, compute one.
	If that fails, the algorithm efficiently produces a dominating monochromatic set $D$ and, based on that, produce a $|C|$-move strategy.
	
	For the first case, we check the conditions of Lemma~\ref{lem:bead_CminusOne_strategy}.
	In case a positive match is found, the proof of Lemma~\ref{lem:bead_CminusOne_strategy} describes the computation of the $|C|-1$-move strategy via a dominating set $D$.

	If no match is found, we check if there exist equally colored vertices $u \in V(H_k)$ and $w \in V(H_{k+2})$ in non-neighboring beads $H_k$ and $H_{k+2}$ for some $k \in \{0,1,\dots,4\}$.
	For this we consider all non-neighboring pairs of beads $H_k$ and $H_{k+2}$, which count five and testing all pairs of vertices $u \in V(H_k)$ and $w \in V(H_{k+2})$ for equal color.
	In the positive case, a dominating set $D$ can be made monochromatic in one move, and a corresponding strategy is produced in linear time.

	Otherwise, the algorithm selects in $\cO(1)$-time an arbitrary pivot $v \in V(H_1)$ and the colors $c = f_0(u)$ of any vertex $u \in V(H_0)$ and $c' = f_0(w)$ of any vertex $w \in V(H_2)$.
	The moves $(v,c)$ and $(v,c')$ make $u,v,w$ a monochromatic dominating set.
	But they also eliminate color $c$.
	This is because $c$ can only appear in $H_0$ and at most one neighboring bead $H_4$ or $H_1$.
	In any case $f^{-1}_0(c) + v$ is connected and, hence, the first move makes $f^{-1}_1(c)$ a color component that contains $v$, the pivot of the second move.
	Accordingly, move two recolors all vertices in $f^{-1}_1(c)$ and thereby eliminates $c$.
	Completing the strategy with the remaining $|C|-1$ colors takes linear time.

	It is easy to see that this all works in polynomial time.
\end{proof}

\ThinSpiderNPComplete*
\renewcommand{\bar}[1]{\overline{#1}}
\newcommand{\dbar}[1]{\overline{\overline{#1}}}
\newcommand{\ubar}[1]{\underline{#1}}
\newcommand{\dubar}[1]{\underline{\underline{#1}}}
\newcommand{\oline}[1]{\overline{#1}}
\newcommand{\doline}[1]{\overline{\overline{#1}}}
\newcommand{\uline}[1]{\underline{#1}}
\newcommand{\duline}[1]{\underline{\underline{#1}}}

\begin{proof}
    Since both problems are known members of $\NP$ even on general graphs, we can focus on the $\NP$-hardness.
    The proof works as follows.
    For a given graph $H$ on $n$ vertices and $m$ edges and an integer $\kappa$, we (i) describe a polynomial time construction of a precolored thin spider $G$ and the integer $k=\kappa+|V(H)|$, (ii) show that the existence of a vertex cover $W$ in $H$ with $|W| = \omega$ implies the existence of a \emph{fixed} strategy for $G$ that has $n+\omega$ moves, and (iii) prove that any free strategy for $G$ with at most $\sigma$ moves implies that there is a vertex cover $W$ of $H$ with $|W| = \sigma-n$.
    At the end of the proof, we conclude from the three steps that the two statements in the lemma are implied.

	\medskip
	To begin with Stage~(i), let $H = (V,E)$ be a graph with $n$-vertex set $V$ and $m$-edge set $E$ and let $\kappa \in \N$.
	For technical reasons, we assume that $V$ is an ordered set.
	We construct the following precolored split graph $G = (K, I, A)$ on clique $K$, independent set $I$, edges $A$, and with initial {\flooding} $f_0$.
	Let $\dot{V} = \{\dot{v} \mid v \in V\}$ and $\ddot{V} = \{\ddot{v} \mid v \in V\}$ be two copies of the vertex set $V$ and define the following four sets of ordered pairs to represent the edge set $E$:
	\begin{align*}
		\bar{E} = \{\bar{vw} \mid vw \in E, v < w \},\\
		\ubar{E} = \{\ubar{vw} \mid vw \in E, v < w \},\\
		\dbar{E} = \{\dbar{vw} \mid vw \in E, v < w\},\\
		\dubar{E} = \{\dubar{vw} \mid vw \in E, v < w\}.\\
	\end{align*}
	Note that in the definition above, we use the vertex ordering only to be able to uniquely choose a \emph{first} and \emph{second} element from unordered pairs.
	Based on this, we define that both, clique and independent set, are created from these copies as $K = \dot{V} + \bar{E} + \ubar{E}$ and $I = \ddot{V} + \dbar{E} + \dubar{E}$.
	Additional to the edges that are present between all pairs of distinct clique vertices, we add exactly the following edges between $K$ and $I$:
	For every vertex $v \in V$, the edge set $A$ contains $\dot{v}\ddot{v}$ and, for every edge $e \in E$ of the graph $H$, we include the edges $\bar{e} \, \dbar{e}$ and $\ubar{e}\,\dubar{e}$ in $A$.
	It is easy to see that $G$ is a thin spider with clique $K$ and independent set $I$ where every vertex $x \in K$ sees exactly one vertex $y \in I$ and vice versa.
	
	As the last step, the construction defines the initial {\flooding} $f_0: K + I \rightarrow V + \{0\}$, where $V$ is used as color set and $0$ is an additional color not present in $V$.
	In fact, for all $x \in K+I$, let
	\[f_0(x) = \begin{cases}
		0, &\text{if } x \in \dot{V},\\
		v, &\text{if } x = \oline{vw} \in \bar{E},\\
		w, &\text{if } x = \uline{vw} \in \ubar{E},\\
		v, &\text{if } x \in \ddot{V},\\
		w, &\text{if } x = \doline{vw} \in \dbar{E},\\
		v, &\text{if } x = \duline{vw} \in \dubar{E}.\\
		\end{cases}\]
	The integer $k$ is simply defined as $k = n+\kappa$.
	Apparently, for a given pair $H, \kappa$, the pair $G,k$, including $f_0$, can be computed in polynomial time, that is, due to the number of edges between vertices in $K$, in $\cO(n^2 + m^2)$ time.

	\begin{claim}\label{clm:NP-completeness:thin_spiders:VC_implies_FixedFlooding}
		If the given $n$-vertex graph $H$ has a vertex cover $W$ with $|W| = \omega$ elements then the constructed graph $G$ with initial {\flooding} $f_0$ has a fixed strategy of $n + \omega$ moves for every fixed pivot chosen from $\dot{V}$.
	\end{claim}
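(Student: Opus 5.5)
The plan is to write down an explicit fixed strategy and check that it floods everything. Fix any pivot $p \in \dot{V}$. Since $\dot{V} \subseteq K$ is a clique and $f_0$ gives all of $\dot{V}$ the color $0$, while no other vertex has color $0$, the {\floodarea} of $p$ under $f_0$ is exactly $\dot{V}$. In particular the choice of $p$ inside $\dot{V}$ does not matter. The strategy floods the colors in three stages:
\begin{enumerate}
\item every color of $W$ once, in arbitrary order;
\item every color of $V - W$ once, in arbitrary order;
\item every color of $W$ once more.
\end{enumerate}
This gives $\omega + (n-\omega) + \omega = n+\omega$ moves. The key invariant is that the {\floodarea} of $p$ only grows and always contains $\dot{V}$. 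It follows that every vertex $\ddot{x}$ is adjacent to the {\floodarea} at all times, and every clique vertex is adjacent to it as well, since $K$ is a clique containing $\dot{V}$. Consequently, whenever a color $x$ is flooded, every vertex of color $x$ in $K$ and the leaf $\ddot{x}$ are conquered; I would also note that conquered vertices are recolored together with the area, so they remain in it.

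After Stage~1, the {\floodarea} contains every clique vertex $\oline{vw}$ or $\uline{vw}$ whose initial color lies in $W$. Next I would determine the leaves colored $x \in V-W$ and their parents. For an edge $vw \in E$ with $v<w$, the leaf $\doline{vw}$ has color $w$ and its parent $\oline{vw}$ has color $v$, while the leaf $\duline{vw}$ has color $v$ and its parent $\uline{vw}$ has color $w$. Thus every edge-leaf carries the color of one endpoint and hangs on a clique vertex carrying the color of the other endpoint. Since $W$ is a vertex cover, if a leaf is colored $x \in V-W$, then the other endpoint of the corresponding edge lies in $W$. Hence the parent of that leaf was conquered in Stage~1.

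During Stage~2, flooding $x \in V-W$ therefore conquers all remaining clique vertices of color $x$, the leaf $\ddot{x}$, and all edge-leaves of color $x$. After Stage~2 the entire clique $K$ is conquered, because every clique vertex has an initial color in $W$ or in $V-W$. Moreover, all vertices initially colored in $V-W$ are conquered. The only unconquered vertices are leaves with initial colors in $W$, and each of their parents lies in $K$ and is therefore in the {\floodarea}. Stage~3 floods each color of $W$ and absorbs all these leaves. The final move leaves $G$ monochromatic.

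The main subtlety is that a leaf whose color is flooded before its parent is conquered is not absorbed at that moment. For this reason the argument must not claim that Stage~1 conquers all leaves colored in $W$; this is exactly why Stage~3 repeats $W$. I would also verify the degenerate cases. If $W=\emptyset$, then $E=\emptyset$, so the only leaves are the vertices $\ddot{v}$, which Stage~2 conquers. If $W=V$, Stage~2 is empty and Stage~3 still works, since after Stage~1 the whole clique $K$ is already conquered. In both cases the move count $n+\omega$ is unchanged.
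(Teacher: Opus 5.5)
Your proof is correct and is essentially the paper's argument: the same fixed strategy from a pivot in $\dot{V}$, flooding the colors of $W$, then those of $V-W$, then those of $W$ again. As in the paper, the vertex-cover property guarantees that every leaf colored in $V-W$ hangs on a clique vertex already conquered in Stage~1.

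Your caution about Stage~1 is justified. The paper's intermediate bullet, that $\overline{\overline{vw}}$ is conquered after move $n$ whenever $v \in W$, fails when both $v,w \in W$ and $w$ is flooded before $v$. As in your version, only the weaker fact is needed: every vertex still unconquered after Stage~2 is a leaf whose color lies in $W$.
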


	Showing Claim~\ref{clm:NP-completeness:thin_spiders:VC_implies_FixedFlooding} provides Stage~(ii) of the proof.
	Select any vertex $p \in \dot{V}$ as the fixed pivot.
	Then assume a vertex cover $W \subseteq V$ of $H$ with $|W| = \omega$, that is, for every edge $vw \in E$, $v \in W$ or $w \in W$.
	Let $w_1, \dots, w_{\omega}$ be any ordering of $W$ and, similarly, let $v_1, \dots, v_{n-\omega}$ be any ordering of $V-W$.
	We show that the fixed move order
	\[S = w_1, \dots, w_\omega, v_1, \dots, v_{n-\omega}, w_1, \dots, w_\omega\]
	is a fixed strategy for $G$ with pivot $p$ that consists of $n+\omega$ moves.
	
	Since the correct number of moves in $S$ follows from construction, we are left with the proof that $S$ floods $G$ monochromatically.
	The first $\omega$ moves flood with all colors in $W$ and, so, for every $H$-edge $vw \in E$ at least one $G$-vertex of $\oline{vw} \in \bar{E}$ and $\uline{vw} \in \ubar{E}$ has already been conquered from $p$, thus, it is already in $f^{-1}_\omega(p)$.
	In fact, $\oline{vw} \in f^{-1}_\omega(p)$, if $v \in W$, and $\uline{vw} \in f^{-1}_\omega(p)$, if $w \in W$.
	Thus, both of the two vertices $\oline{vw}$ and $\uline{vw}$ have been conquered and included in $f^{-1}_\omega(p)$ only if $v$ and $w$ are both part of the vertex cover.
	
	Using these facts about $f_\omega$, we can conclude that, after move $n$, when every color of $V$ has been used once, we have, for all edges $vw \in E$, that
	\begin{itemize}
		\item both, $\oline{vw}$ and $\uline{vw}$, are in $f^{-1}_n(p)$,
		\item the vertex $\doline{vw} \in \dbar{E}$ is in $f^{-1}_n(p)$, if $v \in W$, and $\duline{vw} \in \dubar{E}$ has been added to $f^{-1}_n(p)$, if $w \in W$,
		\item $\doline{vw} \not\in f^{-1}_n(p)$ implies that $w \in W$ and, similarly, $\duline{vw} \not\in f^{-1}_n(p)$ implies $v \in W$.
	\end{itemize}
	Also, this means that $K \subseteq f^{-1}_{n}(p)$ since $\dot{V} \subseteq f^{-1}_i(p)$ for all $i \in \{0, \dots, n\}$.
	Moreover, since every color of $V$ has been used once and because every vertex $\ddot{v} \in \ddot{V}$ is a neighbor of $\dot{v} \in \dot{V} \subseteq f^{-1}_0(p)$, we find that $\ddot{V} \subseteq f^{-1}_{n}(p)$, too.
	
	Now, if a vertex $x$ of $G$ is not in $f^{-1}_n(p)$ then either $x = \doline{vw}$ with $w \in W$ or $x = \duline{vw}$ with $v \in W$.
	Thus, $x$ is in $I$, the only neighbor of $x$ is in $K \subseteq f^{-1}_n(p)$, and the color $f_n(x)$ is in $W$.
	Since the last $\omega$ moves of $S$ repeat the colors $W$, the region $f^{-1}_{n+\omega}(p)$ equals $K+I$ and, thus, $f_{n+\omega}$ monochromatically floods $G$ with $w_\omega$.
	This completes the proof of Claim~\ref{clm:NP-completeness:thin_spiders:VC_implies_FixedFlooding}.

	\medskip
	The next Stage~(iii) basically proves the other direction, that is, free strategies for $G$ of $\sigma$ moves imply the existence of a vertex cover for $H$ with $\sigma-n$ elements.
	But before we can go into this untroubled, we need the following normalization step.
	\begin{claim}\label{clm:NP-completeness:thin_spiders:FreeFloodingNormalization}
		If there is a free strategy for $G$ with $\sigma$ moves then there is also a free strategy $(x_1,c_1), \dots, (x_\sigma,c_\sigma)$ for $G$ where $\{x_1, \dots, x_\sigma\} \subseteq K$. 
	\end{claim}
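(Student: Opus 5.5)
The plan is to exploit the fact that, in the constructed thin spider $G$, every vertex of $I$ is a leaf whose unique neighbour (its parent) lies in $K$: the construction adds to $K$ exactly the matching edges $\dot{v}\ddot{v}$, $\bar{e}\,\dbar{e}$, $\ubar{e}\,\dubar{e}$. Since a leaf is never an inner vertex of a path, the colour of a leaf cannot influence connectivity between any two other vertices. So the evolution of the flooding restricted to $G[K]$, and more generally to $G$ minus the leaves, depends only on which components the moves recolour there and with which colour. I would turn a given free strategy $S=(x_1,c_1),\dots,(x_\sigma,c_\sigma)$ into a new one of at most $\sigma$ moves with all pivots in $K$. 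If the result is shorter, I pad it with dummy moves that re-flood the final colour from a pivot in $K$.

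The conversion is as follows. For every move $(x_i,c_i)$ with $x_i\in I$, let $y_i$ be the parent of $x_i$. If $y_i\in f_{i-1}^{-1}(x_i)$, I replace the pivot by $y_i$; the move then recolours exactly the same component, so nothing changes. Otherwise the component of $x_i$ is the singleton $\{x_i\}$, and I delete the move entirely; call such $x_i$ a \emph{touched} leaf. I then compare the original run with the modified run. The key invariant, proved by induction over the moves, is that the two floodings agree on every vertex except possibly touched leaves. The induction has two parts.

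First, the modified run never deletes a move that matters for $K$. A deleted move only recolours one leaf, and by the observation above this does not alter the colour components of any vertex other than that leaf.

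Second, every kept move recolours the same set of non-touched vertices in both runs. The component of a pivot, intersected with the non-touched vertices, is the same in both runs, because paths between those vertices avoid leaves. A touched leaf may join or leave its parent's component in one run but not the other. This affects only that leaf.

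Hence at the end of the modified run, all vertices except touched leaves carry the final colour $d$ of $S$. This includes all of $K$, so $K$ lies in one component.

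To finish, I append moves with a pivot in $K$, one for each distinct colour appearing on a touched leaf that is not coloured $d$. Cycling through these colours from the pivot in $K$ absorbs every such leaf into the $K$-component, since each leaf is adjacent to $K$. The last of these colours becomes the final colour, and $G$ is monochromatic. The number of appended moves is at most the number of touched leaves, which is at most the number of deleted moves. So the total length is at most $\sigma$, and the pivots all lie in $K$.

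I expect the main obstacle to be stating the invariant carefully. Touched leaves can be merged with their parents at different times in the two runs, and later moves in $S$ may even use such a leaf as a pivot while it is merged. That case must be handled by the pivot replacement rule, applied with respect to the modified run. I would therefore define the conversion online, move by move, checking the condition $y_i\in f_{i-1}^{-1}(x_i)$ in the current modified flooding. I would then verify that the recoloured set on the non-touched vertices is identical to that in the original run.
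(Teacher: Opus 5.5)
Your overall scheme is sound, but the rule you adopt for the obstacle you flagged breaks the invariant. The test $y_i\in f^{-1}_{i-1}(x_i)$ must be evaluated in the flooding of the \emph{original} run $S$, not in the current modified flooding.

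For a counterexample, take a leaf $x\in I$ with parent $y$ and $f_0(x)\neq f_0(y)$. Let $S$ begin with $(x,f_0(y))$ and use $x$ as pivot in every later move; for instance, take a fixed strategy for pivot $y$, prefix this move, and execute it from $x$. In the original run, $x$ lies in the component of $y$ from move $2$ on, so every later move recolours a component containing $y$. In your modified run, move $1$ is deleted. Inductively, $x$ then keeps colour $f_0(x)$ and stays a singleton, so your online test deletes every later move as well. The modified run changes nothing, and your invariant fails at $y\in K$ as soon as $S$ recolours $y$.

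The opposite error also occurs. In the modified run a touched leaf may keep its initial colour and later be absorbed by its parent's component, while in the original run it was recoloured and stays a singleton. A later move from this leaf then recolours only the leaf in $S$, but your rule redirects it to $y_i$ and recolours a whole component of $K$. In both cases your claim that a touched leaf ``affects only that leaf'' fails exactly because the leaf is used as a pivot.

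The repair is to decide each leaf move by the original $f_{i-1}$: redirect to $y_i$ if $y_i\in f^{-1}_{i-1}(x_i)$, and delete otherwise. Then every kept move uses a pivot $z\in K$. Its component $f^{-1}_{i-1}(z)$ in $S$ agrees with its modified component outside the touched leaves, because the floodings agree there and leaves are never inner vertices of paths. Every deleted move changes only its own leaf in $S$. So the invariant holds, and the rest of your argument goes through: all of $K$ ends in colour $d$, there is at most one appended move per touched leaf, and padding fills up to $\sigma$.

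With this correction you have a one-pass alternative to the paper's proof. The paper inducts on the number of $I$-pivots and only treats the first one, where both runs still coincide and the test is unambiguous; it compensates the deleted move by a final move $(\dot{v},v)$. Note that the paper's $S'$ keeps later moves with pivot $x_i$ unchanged, so it tacitly needs the same care you noticed: those moves must be redirected to $y$. This is harmless, because after move $i$ the leaf $x_i$ stays in $y$'s component of the original run.
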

	If $S$ is a $\sigma$-move free strategy for $G$ then, for the sake of an induction parameter, $S$ is called $s$-independent for some $s \in \{0, \dots, \sigma\}$, if exactly $s$ moves of $S$ apply pivots in $I$.
	We use induction on $s$ to show the result.
	For the start at $s=0$, we observe that any $0$-independent strategy for $G$ must have all pivots in $K$.

	\smallskip
	In the induction step at $s > 0$, we may assume that the claim holds for all $s'$-independet strategies for $G$ with $s' < s$.
	So, let $S=(x_1,c_1), \dots, (x_\sigma,c_\sigma)$ be any $s$-independent strategy for $G$ and let $i$ be the minimum index of a move $(x_i,c_i)$ with $x_i \in I$.
	Since $G$ is a thin spider, $x_i$ has exactly one neighbor in $K$, say $y$.
	
	In the following, we create a new move order $S' = (x'_1, c'_1), \dots, (x'_\sigma, c'_\sigma)$ from $S$ and let $F_0=f_0$ and $F_1, \dots, F_\sigma$ be the corresponding {\floodings} and, accordingly, for all $x \in K+I$ and all $j \in \{1, \dots, \sigma\}$, we denote the {\floodareas} with $F^{-1}_j(x)$.
	In fact, for all $j \in \{1, \dots, i-1\}$, we simply let $(x'_j, c'_j) = (x_j,c_j)$.
	If $y \in f^{-1}_{i-1}(x_i)$ then $(x'_i, c'_i) = (y, c'_i)$ and, for all $j \in \{i+1, \dots, \sigma\}$, we let $(x'_j, c'_j) = (x_j,c_j)$, again.
	In this case, $f_j = F_j$ for all $j \in \{1, \dots, \sigma\}$ and, thus, $S'$ is a $(s-1)$-independent strategy for $G$.
	By the induction assumption, the claim is true.

	Otherwise, $S$ has not changed the color of $x_i$ until move $i$, and we have $f_{i-1}(x_i) = \dots = f_1(x_i) = f_0(x_i) = v \in V$.
	Here, we may quietly assume that move $i$ is not useless in $S$ and, thus, $c_i = f_{i-1}(y)$ to merge the {\floodareas} of $x_i$ and $y$.
	Then we leave out move $i$ in $S'$ and, instead, let $(x'_j,c_j) = (x_{j+1}, c_{j+1})$ for all $j \in \{i, \dots, \sigma-1\}$.
	The final move $(x'_\sigma,c_\sigma)$ of $S'$ is then defined as $(\dot{v},v)$.
	Again, $F_j = f_j$ for all $j \in \{1, \dots, i-1\}$.
	Because the move $(x_i, c_i)$ changes only the color of the single vertex $x_i$, a vertex that is solely adjacent to $y \in K$, and since $(x'_j,c_j) = (x_{j+1}, c_{j+1})$ for all $j \in \{i, \dots, \sigma-1\}$, it is easy to see that either $F^{-1}_j(x) = f^{-1}_{j+1}(x)$ or $F^{-1}_j(x) = f^{-1}_{j+1}(x) - x_i$, for all $j \in \{i, \dots, \sigma-1\}$ and all $x \in K+I - x_i$.
	Hence, $F^{-1}_{\sigma-1}(\dot{v}) \supseteq f^{-1}_{\sigma}(\dot{v}) - x_i = K+I-x_i$, which means that $F_{\sigma-1}$ is nearly monochromatic except for maybe $x_i$.
	Since $f_0(x_i) = F_0(x_i) = v$, the final move $(\dot{v},v)$ conquers $x_i$ and makes $F_\sigma$ entirely monochromatic.
	Now $S'$ is again a $(s-1)$-independent strategy for $G$ and the claim follows from the induction assumption.
	This completes the proof of Claim~\ref{clm:NP-completeness:thin_spiders:FreeFloodingNormalization}.

	\medskip
	The following claim builds on this normalization to finish Stage~(iii).
	\begin{claim}\label{clm:NP-completeness:thin_spiders:FreeFlooding_implies_VC}
		If the constructed graph $G$ with initial {\flooding} $f_0$ has a free strategy of $\sigma$ moves that all use a pivot in $K$ then the given $n$-vertex graph $H$ has a vertex cover $W$ with $|W| = \sigma-n$ elements.
	\end{claim}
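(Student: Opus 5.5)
The plan is to read the vertex cover off the sequence of flooding colors of the strategy. Let $S=(x_1,c_1),\dots,(x_\sigma,c_\sigma)$ be a free strategy with all pivots in $K$, and let $c^\ast$ be the final color of $f_\sigma$. The first step rests on one observation. Since no pivot lies in $I$, every vertex $y \in I$ keeps its color until a move recolors a component containing $y$. Such a component is either $\{y\}$ alone (impossible, as $y$ is never a pivot) or contains the unique neighbor $y' \in K$ of $y$. Consequently $y$ joins the growing region only through a move whose pivot component contains $y'$ and whose color equals the current color of $y$. As long as $y$ has not been absorbed, that color is $f_0(y)$.

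From this I derive two counting facts. First, every color $v \in V-c^\ast$ is used by at least one move, because the leaf $\ddot v$ has color $v$ and must be absorbed or recolored. Second, I define $W$ as the set of colors $u\in V$ that occur at least twice among $c_1,\dots,c_\sigma$, possibly adding $c^\ast$ if it is needed. I then aim for the bound $\sigma \ge n + |W|$. The color $c^\ast$ and the extra color $0$ have to be accounted for here. I would argue that the color $0$ on $\dot V$ forces at least one move that does not use a color of $V$ productively, or else that the last move uses $c^\ast$, and that this compensates exactly for the possibly unused color $c^\ast$. If the bound only gives a cover of size at most $\sigma-n$, I pad it with arbitrary vertices, which is harmless since we may assume $\sigma-n\le n$.

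The core step is to show that $W$ is a vertex cover. Take an edge $vw\in E$ with $v<w$ and suppose that $v$ and $w$ are each used exactly once, at times $t_v \ne t_w$, say $t_v<t_w$; the other order is symmetric. The leaf $\doline{vw}$ has color $w$ and hangs at $\oline{vw}$, which has color $v$. The leaf $\duline{vw}$ has color $v$ and hangs at $\uline{vw}$, which has color $w$. By the observation, $\duline{vw}$ is absorbed exactly at time $t_v$, and $\uline{vw}$ must lie in the pivot component at that time. Symmetrically, $\doline{vw}$ is absorbed exactly at time $t_w$, with $\oline{vw}$ in the pivot component.

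Now I trace the color history of the two clique vertices $\oline{vw}$ and $\uline{vw}$. The vertex $\oline{vw}$ starts with color $v$. To be absorbed at time $t_w$ via color $w$, it must be in the pivot component at $t_w$, and the leaf $\doline{vw}$ must still carry color $w$ at that time. I want a contradiction with $\ddot v$ and $\ddot w$, whose parents $\dot v,\dot w$ carry color $0$ and whose colors $v,w$ are likewise available only once. The leaf $\ddot v$ forces $\dot v$ to lie in the pivot component at time $t_v$, and $\ddot w$ forces $\dot w$ to lie in it at time $t_w$. Combining this with the fact that $K$ is a clique, I expect the pivot region at $t_v$ to already contain the whole $0$-colored part together with $\uline{vw}$. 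I would then try to show that $\oline{vw}$, still of color $v$ if it was not yet conquered, merges into the $v$-colored region created at $t_v$. From then on its pivot component has color $v$ rather than $w$, and I must rule out that the single later use of $w$ absorbs $\doline{vw}$ without an additional move.

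I expect this last case analysis to be the main obstacle. It is not a priori clear that a vertex of $K$ cannot be recolored several times by moves of other colors in a way that makes both leaves reachable. If the direct argument fails, my fallback is an exchange argument in the spirit of Claim~\ref{clm:NP-completeness:thin_spiders:FreeFloodingNormalization}. There I would first normalize $S$ so that, after the first move, all pivots lie in one region containing $\dot V$, so that $S$ behaves like a fixed strategy from some $p\in\dot V$. With that normalization, the pure once-per-color argument for fixed flooding of the tree gadget of~\cite{FELLOWS2015} would apply directly.
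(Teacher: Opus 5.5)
\textbf{There is a genuine gap.} The set $W$ you use, namely the colors of $V$ occurring at least twice among the flooding colors $c_1,\dots,c_\sigma$, is in general not a vertex cover. So the contradiction you look for in the core step does not exist.

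Take $H$ with $V=\{u,v,w\}$ and the single edge $vw$, so $n=3$, and play $(\underline{vw},u)$, $(\dot u,u)$, $(\dot u,v)$, $(\dot u,w)$ on $G$:
\begin{itemize}
\item the first move recolors the singleton component $\{\underline{vw}\}$ from $w$ to $u$;
\item the second floods $\dot V$ with $u$ and merges it with $\underline{vw}$ and $\ddot u$;
\item the third absorbs $\overline{vw}$, $\ddot v$ and $\underline{\underline{vw}}$;
\item the fourth absorbs $\ddot w$ and $\overline{\overline{vw}}$.
\end{itemize}
This is a strategy with $\sigma=4=n+1$ moves, all pivots in $K$, in which $v$ and $w$ are each used exactly once. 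Your $W$ is $\{u\}$, which misses $vw$. Adding $c^\ast=w$ gives a cover but violates $\sigma\ge n+|W|$.

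Note also that all pivots after the first already lie in the region containing $\dot V$. So the normalization in your fallback holds here and still does not rescue the flooding-color count. Beyond that, the fallback only names the hard part, an exchange argument turning $S$ into an essentially fixed strategy, without supplying it.

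Your analysis of the core step is correct up to where it stalls, and it shows where the missing move hides. At the unique $v$-move the pivot component must contain $\underline{vw}$ and also $\dot v$. The latter holds because $\dot v$ starts with color $0$ and must carry color $v$ at some time. Hence every $v\in V$ is flooded at least once from the region containing $\dot V$, which also makes your compensation for $c^\ast$ and color $0$ unnecessary.

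This region cannot carry color $w$ before $t_w$. Therefore $\underline{vw}$ must have left its color $w$ earlier, through a move whose pivot lies outside that region. That move may flood with any third color, and your counting charges it to that color instead of to $w$.

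The missing idea is the paper's charging relative to a fixed pseudo-pivot $p\in\dot V$:
\begin{itemize}
\item a move from inside the flooded area of $p$ counts as a (hard) move of its flooding color;
\item a move from outside counts as a (soft) move of the color $v$ carried by a vertex of its pivot component that still has its initial color $v$.
\end{itemize}
Then every move counts for at most one $v\in V$, and every $v\in V$ has at least one hard move. The colors with at least two such moves form a vertex cover of size at most $\sigma-n$. In the example, the first move is a soft $w$-move, which yields the cover $\{w\}$.
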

	Let $S = (x_1,c_1), \dots, (x_\sigma,c_\sigma)$ be a free strategy for $G$ with $\{x_1, \dots, x_\sigma\} \subseteq K$.
	Although $S$ is free, the idea is to reinterpret every move from the perspective of a fixed pivot in $\dot{V}$.
	So, in the remainder of the claim proof, fix any vertex $p$ in $\dot{V}$.
	Based on this, we want to reverse the idea of Claim~\ref{clm:NP-completeness:thin_spiders:VC_implies_FixedFlooding} and disregard colors $v\in V$ for $W$ that are conquered in a single move and, instead, use just the remaining colors for the vertex cover.
	
	However, in a free strategy, a move does not necessarily need to apply color $v$ to conquer the respective areas in $G$.
	In order to describe the intended conquered color, we introduce the following move categories in relation to the pseudo pivot $p$.
	More precisely, for a color $v \in V$, a move $(x_i, c_i)$ in $S$ is called 
	\begin{itemize}
		\item a hard $v$-move, if $x_i \in f^{-1}_{i-1}(p)$ and $c_i = v$, or
		\item a soft $v$-move, if $x_i \not\in f^{-1}_{i-1}(p)$ and there is a vertex $x \in f^{-1}_{i-1}(x_i)$ with $f_0(x) = f_{i-1}(x) = v$.
	\end{itemize}
	Notice that the two conditions exclude each other.
	Beyond that, every move $(x_i, c_i)$ in $S$ is a $v$-move for at most one $v \in V$.
	In fact, if $x_i \in f^{-1}_{i-1}(p)$ then $i$ is a hard $v$-move just for $v = c_i$ and being a soft $w$-move for $w \in V-v$ is excluded.
	Otherwise, when $x_i \not\in f^{-1}_{i-1}(p)$, having a hard move $i$ is excluded.
	Under this condition, we see from the definition that the color of vertices in $f^{-1}_{i-1}(x_i)$ are all the same.
	So, if a vertex $x$ with $f_0(x) = f_{i-1}(x) = v$ exist in $f^{-1}_{i-1}(x_i)$, then $v$ is uniquely determined and $i$ is a soft $v$-move for exactly this $v$.
	
	Notice that some moves may neither be a hard nor a soft $v$-move for any $v \in V$.
	We furthermore observe, for every color $v \in V$, that there is at least one hard $v$-move $(x_i, c_i)$ (hence, where $x_i \in f^{-1}_{i-1}(p)$ and $c_i = v$), because this is the only way to conquer $\ddot{v} \in I$ from a pivot in $K$.
	
	\smallskip
	Now, let $W'$ consist of all colors $v \in V$ (which excludes $0$) that provide more than one $v$-move in $S$, that is, where distinct $i, j \in \{1, \dots, \sigma\}$ exist such that $(x_i,c_i)$ and $(x_j,c_j)$ are both $v$-moves, hard or soft.
	As we have just seen, $S$ contains at least one hard $v$-move for every color $v \in V$ and all other elements of $S$ are $v$-moves for at most one $v\in V$.
	So, there can be at most $\sigma-n$ colors $v \in V$ that appear as more than one $v$-move and, hence, $|W'| \leq \sigma - n$.
	To precisely hit the quota, we assume that $W$ is a super set of $W'$ that is filled with $\sigma - n - |W'|$ additional and arbitrary vertices from $V-W'$ so that $|W| = \sigma - n$.
	
	It remains to show that $W$ is a vertex cover of $H$.
	For a contradiction, assume that there is an edge $vw \in E$ such that $v,w \in V-W$.
	By construction, this would mean that the colors $v$ and $w$ appear in only one $v$-move and one $w$-move of $S$, respectively.
	Say $(x_i,c_i)$ is the only $v$-move and $(x_j,c_j)$ the only $w$-move in $S$, where, without loss of generality, $i < j$.
	As explained above, these moves must be hard and, so, we would find that $x_i \in f^{-1}_{i-1}(p)$ and $c_i = v$ and, accordingly, $x_j \in f^{-1}_{j-1}(p)$ and $c_j = w$.
	
	To derive the contradiction, we begin to argue that $f_i(\uline{vw}) = w$ and $f_i(\duline{vw}) = v$.
	First, assume that at least one of the conditions, $f_i(\uline{vw}) \not= w$ and $f_i(\duline{vw}) \not= v$, holds.
	Then we select the earliest move $(x_\ell, c_\ell)$ where $\uline{vw} \in f^{-1}_{\ell-1}(x_\ell)$.
	Clearly, this move exists with $\ell \leq i$.
	In fact, if $f_i(\uline{vw}) \not= w$, this is because $\uline{vw}$ is initially $w$-colored and there must be a prior move that recolors $\uline{vw}$ with a different color $c_\ell$.
	In the other case, we notice that recoloring any vertex in $I$, like $\duline{vw}$, requires conquering it from a pivot in $K$, beforehand.
	This is only possible with a move like $\ell$ that floods the only neighbor $\uline{vw}$ in the color $c_\ell = f_0(\duline{vw}) = v$.

	The vertex $x_\ell$ cannot be in $f^{-1}_{\ell-1}(p)$ because this would mean $\uline{vw} \in f^{-1}_{\ell-1}(x_\ell) = f^{-1}_{\ell-1}(p)$.
	Since $\ell$ is the first move recoloring $\uline{vw}$ this would require the {\floodareas} of $p$ and $\uline{vw}$ being merged in a prior move $\ell'$ with $x_{\ell'} \in f^{-1}_{\ell'}(p)$ and $c_{\ell'} = f_0(\uline{vw}) = w$.
	This is a hard $w$-move with $\ell' < \ell \leq i < j$, which is impossible.
	But having $x_\ell$ not in $f^{-1}_{\ell-1}(p)$ is not working either.
	More precisely, there is $\uline{vw} \in f^{-1}_{\ell-1}(x_\ell)$ with $f_0(\uline{vw}) = f_{\ell-1}(\uline{vw}) = w$.
	This means that $\ell$ is a soft $w$-move with $\ell \leq i < j$, again impossible.

	Hence, after the only $v$-move $(x_i, c_i)$, we still have $f_i(\uline{vw}) = w$ and $f_i(\duline{vw}) = v$.
	Since $x_i \in f^{-1}_{i-1}(p)$ and $c_i = v$, the {\floodarea} $f^{-1}_i(p)$ is $v$-colored.
	However, since the only neighbor $\uline{vw}$ of $\duline{vw}$ is not $v$-colored, it follows that $\uline{vw},\duline{vw} \not\in f^{-1}_i(p)$.
	As discussed above, conquering $\duline{vw}$ from a pivot in $K$ after move $i$ works only via a prior move that recolors $\uline{vw}$.
	Thus, let $(x_\ell,c_\ell)$ be the earliest move with $i < \ell$ that recolors $\uline{vw}$, that is, with $\uline{vw} \in f^{-1}_{\ell-1}(x_\ell)$.
	
	If $x_\ell$ is in $f^{-1}_{\ell-1}(p)$ then $\uline{vw} \in f^{-1}_{\ell-1}(p)$, too.
	In the same way as described above, this means that there is a prior hard $w$-move that conquers $\uline{vw}$ from a pivot in the {\floodarea} of $p$.
	This can only be move $j$.
	But afterward, conquering $\duline{vw}$ from a pivot in the {\floodarea} of $\uline{vw}$ produces another hard $v$-move, which is impossible.
	Otherwise, when $x_\ell \not\in f^{-1}_{\ell-1}(p)$, we again have $\uline{vw} \in f^{-1}_{\ell-1}(x_i)$ with $f_0(\uline{vw}) = f_{\ell-1}(\uline{vw}) = w$.
	This means that $\ell$ is a soft $w$-move, but noticeably not a hard one.
	Together with the hard $w$-move $j$, we would have at least two $w$-moves, a contradiction.
	Hence, $W$ must be a vertex cover of $H$ with $|W| = \sigma-n$, which completes the proof of Claim~\ref{clm:NP-completeness:thin_spiders:FreeFlooding_implies_VC}.
	
	\medskip
	We are finally ready to prove Lemma~\ref{lem:NP-completeness:thin_spiders} via polynomial time reductions of the \textsc{Minimum Vertex Cover} problem to {\MinimumFloodIt} and {\MinimumFreeFloodIt}, respectively.
	In both cases, the reduction is implemented by the computation of $G, k$ and $f_0$ from a given graph $H$ with integer $\kappa$, which is in polynomial time, as explained above.
	
	If $H$ has a vertex cover $W$ with $|W|\leq \kappa$ then, according to Claim~\ref{clm:NP-completeness:thin_spiders:VC_implies_FixedFlooding} $G$ has a fixed strategy of $n+|W| \leq n+\kappa = k$ moves.
	Since fixed flooding is a special case of free flooding, Claim~\ref{clm:NP-completeness:thin_spiders:VC_implies_FixedFlooding} establishes the first direction for both flooding problems.

	Reversely, if $G$ has a strategy of at most $k$ moves, free or fixed with pivot in $\dot{V}$, then Claim~\ref{clm:NP-completeness:thin_spiders:FreeFloodingNormalization} secures the existence of a free strategy $S$ for $G$ that has the same number $|S|$ of moves and where all pivots are chosen in $K$.
	Via Claim~\ref{clm:NP-completeness:thin_spiders:FreeFlooding_implies_VC}, this implies the existence of a vertex cover $W$ for $H$ with $|W| = |S|-n \leq k-n = \kappa$ elements.
	This completes the proof.
\end{proof}


\section*{Disclaimer on the Use of AI}

We note that the AI system \textsc{ChatGPT}, in particular the models \emph{GPT-5 mini} and \emph{GPT-5.3}, has been used in this work for the only purpose of improving the presentation and readability of the text, as the authors are not native English speakers.
No other use of \textsc{ChatGPT} or its subroutines has been made in this paper.
In particular, no AI framework has contributed to any creative decisions, the development of original results, or the construction of rigorous mathematical proofs, all of which were carried out entirely by the authors.


\bibliographystyle{abbrv}
\bibliography{floodit}

\end{document}